\def\changesHilighted{true}
\newtheorem{problem}{Problem}
\newtheorem{theorem}{Theorem}
\newcommand{\norm}[1]{\left\lVert #1 \right\rVert}
\definecolor{dogwoodrose}{rgb}{0.84, 0.09, 0.41}
\definecolor{slateblue}{rgb}{0.42, 0.35, 0.8}
\ifnum\pdfstrcmp{\changesHilighted}{false}=0
\newcommand{\R}{\mathbb{R}}
\newcommand{\C}{\mathcal{C}}
\newtheorem{remark}{Remark}
\newtheorem{assumption}{Assumption}
\newtheorem{definition}{Definition}
\newtheorem{lemma}{Lemma}
\journal{Advanced Engineering Informatics}
\begin{document}

\begin{frontmatter}

\title{TRUST-UP: Trustworthy Reinforcement learning Using Safe Techniques for UAV Pursuit}

\author[1,2]{Yaosheng Deng\corref{equal}}
\ead{yaosheng001@e.ntu.edu.sg}
\author[2,3]{Mengtao Lyu\corref{equal}}
\ead{mengtao.lyu@gatech.edu}

\author[1]{Junjie Gao}
\ead{junjie008@e.ntu.edu.sg;}

\author[1]{Jiaping Xiao}
\ead{jiaping001@e.ntu.edu.sg}

\author[1,2]{Mir Feroskhan\corref{cor1}}
\ead{mir.feroskhan@ntu.edu.sg}

\cortext[cor1]{Corresponding author}
\cortext[equal]{These authors contributed equally to this work.}

\affiliation[1]{organization={School of Mechanical and Aerospace Engineering, Nanyang Technological University},
            city={Singapore},
            postcode={639798}, 
            country={Singapore}}
\affiliation[2]{organization={Air Traffic Management Research Institute, Nanyang Technological University},
            city={Singapore},
            postcode={637460}, 
            country={Singapore}}
\affiliation[3]{organization={Georgia Institute of Technology},
            city={Atlanta},
            postcode={30332}, 
            state={GA},
            country={USA}}

\begin{abstract}
Reinforcement Learning (RL) enables autonomous aerial vehicles to adapt quickly and make efficient decisions, making it well-suited for dynamic urban air mobility operations. However, the lack of safety guarantees and transparency hinders the airworthiness certification of RL-based flight control systems, particularly in low-altitude urban environments with human presence. This paper proposes a trustworthy reinforcement learning algorithm that utilizes safe techniques to address the AI trustworthiness requirements for aviation safety, ensuring the transparent and certifiable deployment of RL in safety-critical aerial operations. Specifically, we proposed a Trustworthy Reinforcement learning Using
Safe Techniques for UAV Pursuit (TRUST-UP), which consists of two key components: a safety filter constructed from Control Barrier Functions (CBFs) that transforms unsafe RL actions into provably safe flight commands, and a switching strategy that enhances feasibility while maintaining operational transparency. These components enable trustworthy AI deployment in urban airspace, satisfying technical robustness and transparency requirements for aviation certification.
Simulation results demonstrate that TRUST-UP enables autonomous UAVs to safely navigate congested urban environments while maintaining human-interpretable decision logic. This work contributes toward certifiable and explainable AI frameworks for low-altitude aviation, addressing the critical need for trustworthy autonomous flight systems in future urban air mobility.
\end{abstract}



\begin{keyword}
    Reinforcement learning; aviation safety; control barrier functions
\end{keyword}
\end{frontmatter}


\section{Introduction}
\label{sec: 1-Introduction}
Autonomous pursuit and navigation are fundamental capabilities for next-generation aerial vehicles, with critical applications in urban air mobility including eVTOL emergency response operations~\cite{shafiqurrahman2023electric,qiu2025electric}, UAV-based traffic monitoring in urban corridors~\cite{de2019multi,shao2021real}, and coordinated air-ground surveillance systems~\cite{zhang2022game,peng2025air}. Reinforcement Learning (RL) has demonstrated superior performance in enabling agile flight control and adaptive navigation for these aerial missions~\cite{de2021decentralized,wang2023high}. However, the lack of safety guarantees and transparency hinders the airworthiness certification of RL-based flight control systems, particularly in low-altitude urban environments with human presence, violating the stringent requirements for trustworthy autonomous systems. This fundamental deficiency stems not just from technical limitations, but from a failure to address the core issue of human perception.

Crucially, the challenge of deploying autonomous systems in human environments overlooks the vital concept of perceived safety, which is a critical factor for trustworthiness and public acceptance. Current RL policies primarily address technical safety, focusing on metrics like collision avoidance with inanimate objects. This narrow focus fails to account for how safety is perceived by people. Unlike inanimate objects, humans require expanded "trust radii" that reflect psychological comfort, creating a fundamental conflict between machine optimized behavior and trust from a human perspective~\cite{zhou2025knowledge}. This conflict manifests in multiple ways. In a broad sense, a drone's mere presence over a crowded plaza can induce widespread anxiety, as it is often perceived as an unpredictable risk regardless of its adherence to a legally safe altitude~\cite{Torija2021, Wing_complaints_reference_2022, liu2021deep, gu2024review}. The conflict becomes more acute and personal in close proximity encounters, as illustrated in Figure~\ref{fig:diyizhang}. An RL UAV may execute a high speed, last second maneuver to narrowly avoid a pedestrian. From a purely technical standpoint, this action is a success because a collision is averted. From the perspective of human perception, however, the maneuver is a failure. The aggressive trajectory invades the human's peripersonal space, which is a cognitive safety zone, and is perceived as a dangerous loss of control which shatters trust~\cite{Ferroni2022, Chidambaram2022}. Therefore, a system that is technically safe can simultaneously be perceived as dangerously untrustworthy. These real-world psychological responses expose a critical gap between technical safety and perceived trustworthiness. The concept of a ``trust radius'' is grounded in the established principle of proxemics, which defines the personal ``safety space'' that individuals unconsciously maintain around themselves~\cite{Hall1966}. This finding is supported by numerous recent UAM acceptance studies~\cite{FAAUAMConOps2020}. Recognizing this, major aviation authorities like the FAA and EASA have formally identified ``societal acceptance'' as a key certification barrier for autonomous systems~\cite{Eker2020}. The challenge of respecting these psychological boundaries is where existing safe RL methods reveal their fundamental inadequacy.

While the field of safe RL has emerged to provide safety assurances, existing methods focus on general safety barriers for autonomous flight systems and are ill-equipped to handle the dynamic ``trust radii'' humans require. For example, reward shaping techniques were proposed to penalize unsafe flight behaviors~\cite{yang2023model,marvi2021safe}. {However, these constrained RL techniques model safety as soft penalties, which cannot provide the deterministic guarantees required for human psychological trust.} The traditional shield-based methods were introduced to filter RL actions through safety layers~\cite{carr2023safe}, but they lack the specificity to guarantee safety around pedestrians. Even more advanced Control Barrier Function (CBF) integration with RL, proposed for flight safety certification~\cite{cohen2023safe,wang2022ensuring}, fails to guarantee feasibility when enforcing human-aware constraints. Specifically, predictive-based approaches assume uniform safety distances~\cite{hu2023safe}, relaxation-based methods contradict the strict safety margins mandated for human protection~\cite{chen2023safety}, and learning-based synthesis adapts constraints using static obstacle data~\cite{so2024train, chen2024learning}, none of which can capture the psychological factors of human comfort zones. This inadequacy is compounded by the black-box nature of RL, which violates transparency requirements by making it impossible to verify how an agent will navigate these boundaries~\cite{soudain2025easa}.
These challenges become acute in aerial pursuit missions, which demand simultaneous satisfaction of conflicting constraints of sensing range, collision avoidance, and thrust limitations, making it clear that a new framework is needed to bridge this gap.


To address this critical gap between ensuring human-aware safety and maintaining mathematical feasibility, this paper proposed a Trustworthy Reinforcement learning Using Safe Techniques for UAV Pursuit (TRUST-UP) algorithm that provides formal safety guarantees with transparent decision logic for autonomous aerial vehicles. The TRUST-UP includes a trained RL model for aerial pursuit, a safety filter constructed from three adaptive CBF constraints with provable feasibility, and a switch strategy that provides interpretable safety decisions between RL outputs and filtered flight commands. To design our algorithm, we first augment the UAV dynamics by introducing virtual control inputs, which transform the thrust-constrained problem into an output-constrained problem, enhancing the feasibility of the associated QP. {This transformation resolves the potential infeasibility induced by traditional static input constraints~\cite{deng2025safety,he2025state,zhang2025control}, ensuring the satisfaction of human-aware safety constraints during highly dynamic interactions.} Based on the UAV's position and velocity, we then construct the safety filter using three adaptive CBFs, ensuring that unsafe RL outputs are converted into safe flight commands even under wind disturbances. Finally, we incorporate a switch strategy to alternate between RL actions and the safety filter, and we prove that the proposed switch strategy ensures the safety filter satisfies the Karush-Kuhn-Tucker conditions for each safety constraint, guaranteeing the airworthiness compliance of TRUST-UP. The main contributions of this paper are summarized as follows:
\begin{itemize}
\item We address thrust constraints by augmenting the UAV system with virtual control inputs and designing a thrust-constrained CBF based on the augmented framework. This approach, unlike conventional constant constraints~\cite{deng2025safety,sun2024moving}, enforces operational flight envelope limits while adaptively relaxing upper bounds during emergency maneuvers, maintaining mission performance while ensuring verifiable flight safety.
\item We design two position-based CBFs to ensure separation standards and sensor coverage in urban airspace. Together with the thrust-constrained CBF, these form a safety filter solved as a QP with guaranteed feasibility. The adaptive technique enables UAVs to maintain safe operations even under atmospheric disturbances.
\item We proposed the TRUST-UP algorithm with a transparent switch strategy that determines RL output safety and switches accordingly. Unsafe RL outputs are converted into safe flight commands by solving a QP. We formally prove that TRUST-UP satisfies the KKT conditions for all safety constraints, ensuring compliance with aviation trustworthiness requirements for certified deployment.
\end{itemize}
\begin{figure}[!ht]
    \centering
    \includegraphics[width=1\linewidth]{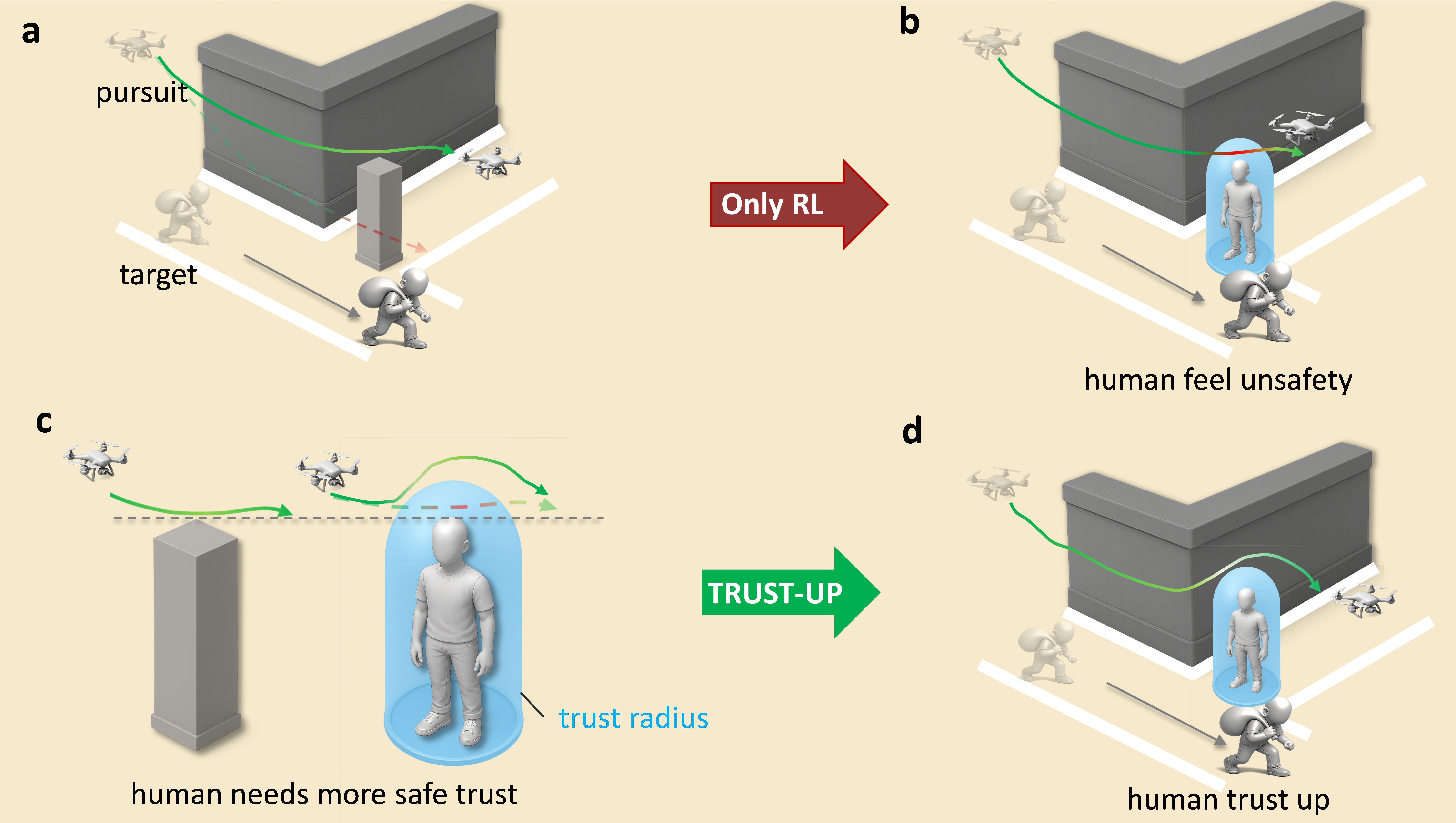}
    \caption{The trust radius challenge in deploying RL-based UAVs in human-populated environments and the TRUST-UP solution. {a}, Conventional RL training scenario with static obstacles, where UAVs learn pursuit behaviors using uniform safety distances suitable for inanimate objects. {b}, Deployment failure when the same RL policy encounters humans, where inadequate safety margins lead to {perceived lack of safety} and may violate regulatory requirements. {c}, Fundamental safety requirement disparity: humans require an expanded {required trust radius} (blue region) compared with inanimate obstacles because of psychological, regulatory, and ethical considerations that standard RL training cannot capture. {d}, TRUST-UP framework enforcement of human-aware safety boundaries for {enhanced human trust}.
}
    \label{fig:diyizhang}
\end{figure}
The rest of the paper is organized as follows. Section~\ref{sec-preliminary} presents some preliminaries. Section~\ref{sec-ps} formulates the aerial pursuit problem, including UAV dynamics, airspace safety constraints, and mission objectives. Section~\ref{sec-method} presents the proposed TRUST-UP algorithm, detailing the thrust-constrained CBF, the safety filter, and the switch strategy. The theoretical guarantees for aviation certification are also provided in this section. Section~\ref{sec-experiment} validates the proposed method through simulations, demonstrating its effectiveness in ensuring flight safety and performance in various urban air mobility scenarios. Finally, Section~\ref{sec-conclusion} concludes the paper and discusses potential directions for future work.

\section{Preliminary}
\label{sec-preliminary}
    Let $\mathbb{R}$, $\mathbb{R}^+$ denote the field of real numbers and the set of non-negative reals. For a vector $x \in \mathbb{R}^n$, 
$x_i$ denotes the $i$-th element of $x$, $\|x\| = \sqrt{x_1^2 + \dots + x_n^2}$ denotes the two-norm of $x$.
Let $M$ be a matrix, define $\lambda_{\min}M$ as the minimum eigenvalue of a matrix $M$. A continuous function $\alpha:[0,a)\to[0,\infty)$ is class- $\mathcal{K}$ for some $a>0$ if it is strictly increasing on the domain, and $\alpha(0)=0$. It is class-$\mathcal{K}_\infty$ if $\lim_{r\to\infty}\alpha(r)\to\infty$. 
\subsection{CBF}\label{sec-preliminary-a}
Consider the following control affine system
\begin{equation}\label{preliminary SDE}
\begin{aligned}\dot{x}(t)=f(x) + g(x)u(t),  \end{aligned}
\end{equation}
where, $x\in\R^n$ is the state variable, $u\in\R^m$ denotes the control input of~\eqref{preliminary SDE}. $f$ and $g$ are locally Lipschitz continuous. 
\begin{definition}\label{def-Hs}~\cite{ames2019control}
Let 
\begin{equation}\label{prelimi-set H}
\C=\{x\in\R^n:h(x)\geq0\},
\end{equation}
where $h\colon\R^n\to\R$ is a continuously differentiable function.
Set $\C\subset \R^n$ is forward invariant for system~\eqref{preliminary SDE} if its solutions starting at all $x(t_0)\in\C$ satisfy $x(t)\in\C$ for all $t\geq t_0$.
\end{definition}
\begin{definition}\label{def-cbf}~\cite{ames2020integral}
Given a set $\C$ as in~\eqref{prelimi-set H}, $h(x)$ is a CBF for system~\eqref{preliminary SDE} if there exists a class $\mathcal{K}$ function $\gamma$ such that 
\begin{equation}
L_fh(x)+L_gh(x)u + \gamma(h(x))\geq 0
\end{equation}
for all $x\in\C$.
\end{definition}
\subsection{High-order CBF (HOCBF)}
In the context of high-order control barrier functions, first, we introduce relative degree:
\begin{definition}
A continuously differentiable function $h$ is said to have relative degree $r$ on a given domain of $x$ with respect to system~\eqref{preliminary SDE} if $L_gL_f^kh(x)=0$ for all $k<r-1$ and $L_gL_f^kh(x)\ne 0$ hold for all $x\in\R^n$.
\end{definition}
While Definition 2 is only applicable to CBFs with a relative degree of one, many applications often involve CBFs with higher relative degrees. To accommodate such scenarios, an extended definition known as HOCBFs has been developed in~\cite{intro-multicbf-3}. In this approach, we define a series of continuously differentiable functions $\hbar\colon \R^n\to \R$ as
\begin{equation}\label{zeta}
\begin{aligned}&\hbar_1({x})=h({x}),\\&\hbar_i({x})=\dot{\hbar}_{i-1}({x})+\alpha(\zeta_{i-1}({x})), i\in\{1,\ldots,r\},\end{aligned}
\end{equation}
We also define their zero-superlevel sets $\C_i$ and their interior sets for $i\in\{1,...,r\}$ as
\begin{equation}\label{zeta set}
    \begin{aligned}\mathcal{C}_i=\{{x}\in\mathbb{R}^n\mid\hbar_{i-1}({x})\geq0\}.\end{aligned}
\end{equation}
\begin{definition}
Let the functions $\zeta_i(x)$ and sets $\mathcal{C}_i$ be defined by~\eqref{zeta} and~\eqref{zeta set}, respectively. The $r$-th order continuously differentiable function $h(x)$ with relative degree $r > 1$ is called a HOCBF if $h$ and its derivatives up to order $r$, are locally Lipschitz continuous such that 
\begin{equation}
\begin{aligned}\sup_{u\in\mathbb{R}^m}\Big[L_f^rh+L_gL_f^{r-1}h {u}+\sum_{i=1}^{r-1}L_f^i(\alpha\circ\hbar_{r-i-1}).\\+\alpha(\hbar_{r-1})\Big]\geq0,\end{aligned}
\end{equation}
for all ${x}\in\bigcap_{i=1}^r (\mathcal{C}_i)$.
\end{definition}

\section{Problem Statement}
\label{sec-ps}
    This paper aims to design a safe algorithm for the target pursuing problem with collision avoidance, sensing limitation, and input constraint. The algorithm design is based on the following safe strategy. 
\subsection{Safe Strategy for Pursuers}\label{subsec-ps-1}
Consider a group of $n$ pursuit UAVs and $n$ target UAVs. Let $x_i\in\R^n$ where $i\in\mathcal{I}_x = \{1, \ldots, N\}$ be the position of the $i$-th pursuit UAV and $\mathcal{X}=\{x_i\}$ denotes a
collection of functions of $x_i$. Let $q_i\in\R^n$, $i\in\mathcal{I}_q=\{1, \ldots, N\}$, be the position of the $i$-th target UAV and $\mathcal{Q}=\{q_i\}$ denotes a collection of functions of $q_i$. Note that the $i$-th pursuer's primary task is tracking the corresponding target position $q_i$. Suppose the sensing range of each pursuit UAV is $R_i\in\R^+$, then the sensing safe for $i$-th pursuer should satisfy
\begin{equation}\label{xi-qi<Ri}
\norm{p_i(t)-q_i(t)}\leq R_i,
\end{equation}
for all $i\in\mathcal{I}_x$ and $t\geq 0$.

Similar to the UAVs, the environment obstacles $j\in\mathcal{I}_o=\{1, \ldots, M\}$ are also modeled as a rigid sphere at $o_j\in\mathcal{O}\subset \R^{n,M}$. Let $\mathcal{P}=\mathcal{X}\cup\mathcal{Q}\cup\mathcal{O}$ denote the set of all UAVs and obstacles. To describe the $i$-th pursuit UAV's collision avoidance requirements, define $\mathcal{P}_{-i}=\mathcal{P} \backslash\{x_i\}$ as the set of all UAVs excluding $x_i$. The safe distance between the position of $i$-th pursuer $x_i$ and the position $p_k\in \mathcal{P}_{-i}$ should satisfy
\begin{equation}\label{xi-pk>ri}
\norm{x_i(t) - p_k(t)} \geq  r_i,
\end{equation}
for all $k\in \mathcal{I}_{k}=\{1,\ldots,2N+M-1\}$ and all $t\geq 0$. 

{The proposed trust radii in~\eqref{xi-qi<Ri} and~\eqref{xi-pk>ri} represent the maximum acceptable bound of human proxemics, thereby establishing a deterministic and interpretable safety baseline for human-populated environments.} {The following problem formulation is established under a synchronized shared-state information setting, where each pursuer has access to synchronized fused state estimates of the relevant agents and obstacles needed to construct the set $P^{-i}$.
}
Next, we made the following assumption for the observation of $p_k$.
\begin{assumption}\label{asm-p dot p}
The velocity and acceleration of the $k$-th UAV, $\dot p_k$ and $\ddot p_k$, are bounded by two positive constraint $\rho_v\in\R^+$ and $\rho_a\in\R^+$, such that $\norm{\dot p_k(t)}\leq \rho_v$, and $\norm{\ddot p_k(t)}\leq \rho_a$ for all $k\in\mathcal{I}_{k}$ and $t\geq 0$.
\end{assumption}
\subsection{Statement of Problem}
The following control affine system governs the position vector  $x_i$ of the $i$-th pursuit UAV:
\begin{equation}\label{sys1}
\begin{aligned}
\dot{x}_i(t)=f(x_i) + g(x_i)u_i(t) + Y(x_i)\theta,
\end{aligned}
\end{equation}
where $x_i\in\R^n$ is the state of $i$-th pursuer, $u_i\in\R^m$ is its control input. $f(x_i)$ is the drift term, and $g(x_i)$ is the control matrix, both of which are locally Lipschitz continuous functions of $x_i$. $Y:\mathbb{R}^n\to\mathbb{R}^{n\times p}$ represents known locally Lipschitz lumped uncertainties, and $\theta \in \R^p$ is a constant vector of uncertain parameters. 

Let $u_{\text{nom},i}$ denote the nominal control input of system~\eqref{sys1}, which is designed to achieve the primary objective of system~\eqref{sys1}, such as target tracking or stabilization. It represents the control input in the absence of safety constraints and governs the behavior of the $i$-th pursuer as
\begin{equation}\label{syscloseloop}
\begin{aligned}
\dot{x}_i(t)=f(x_i) + g(x_i)u_{\text{nom},i}(t) + Y(x_i)\theta,
\end{aligned}
\end{equation}
One specific form of $u_{\text{nom},i}$ can be:
\begin{equation}
u_{\text{nom},i}(t) = \int_0^t \pi_i(x_i(\tau))d\tau,
\end{equation}
where $\pi_i(x_i)$ is a continuous control law derived, for example, from model-free RL~\cite{zhang2023adaptive} or optimization-based methods~\cite{jiang2023incorporating}. However, directly applying $u_{\text{nom},i}$ in system~\eqref{syscloseloop} can lead to unsafe behaviors, such as violating input limits, collision constraints, or sensing requirements. 
To mitigate these risks, we first introduce a time-varying input constraint $\kappa=\kappa(x_i(t))$ for the $i$-th pursuer to regulate its control input $u_i$
\begin{equation}\label{adaptive bound}
\norm{u_i(t)}\leq \kappa(x_i(t)),
\end{equation}
for all $t\geq 0$. {This adaptive bound $\kappa(x_i(t))$ in~\eqref{adaptive bound} correlates the maximum thrust allowance with the system state $x_i(t)$. Through~\eqref{adaptive bound}, the UAV can adaptively relax its maximum control authority to guarantee human trust radii during sudden evasive actions, while strictly regulating erratic RL commands during nominal flight.} To enforce this input constraint, we define an input constraint safe set for system~\eqref{sys1} using the CBF technique. Specifically, a Lipschitz continuous function $h_{u,i}$ is defined as an input constraint CBF.
\begin{equation}\label{hu}
h_{u,i}(x_i,u_i,t) = \kappa(x_i(t))^2 - \norm{u_i(t)}^2,
\end{equation}
and the input constraint safe set for the $i$-th pursuer is defined as: \begin{equation}\label{C3} \C_{u,i} = \{u_i \in \mathbb{R}^m \mid h_{u,i}(x_i,u_i,t) \geq 0\}. \end{equation}

Unlike conventional single-target pursuit problems, this paper considers $N$ target-pursuer pairs in an environment with $M$ obstacles. In such a complex scenario, each pursuer must ensure sensing range safety as in~\eqref{xi-qi<Ri} while simultaneously maintaining collision avoidance by satisfying the safe radius condition in~\eqref{xi-pk>ri}. These dual safety requirements necessitate the formulation of appropriate safety constraints. Therefore, we introduce the collision avoidance safety set $\C_{c,i}$ and sensing range safety set $\C_{s,i}$ for each pursuer $i$ as follows:
\begin{equation}\label{C1}
C_{c,i} = \bigcap_{k\in\mathcal{I}_k}\{x_i\in\R^n\mid h_{c,i,k}(x_i(t))\geq 0\},
\end{equation}
\begin{equation}\label{C2} 
\C_{s,i} = \{x_i\in\R^m\mid h_{s,i}(x_i(t))\geq 0\},\end{equation}
where  $h_{c,i,k}$ and $h_{s,i}$ are two Lipschitz continuous function with a relative degree $2$ defined as
\begin{equation}\label{h1}
h_{c,i,k}(t) = \norm{x_i(t) - p_k(t)}^2 - r_i^2,
\end{equation}
\begin{equation}\label{h2} 
h_{s,i}(t) = R_i^2 - \norm{x_i(t) - q_i(t)}^2.
\end{equation}

Together with the input constraint set $\C_{u,i}$ in~\eqref{C3}, the collision avoidance set $\C_{c,i}$ in~\eqref{C1} and sensing range set $\C_{s,i}$ in~\eqref{C2} collectively describe the safety requirements for each pursuer. 

Now we can state the problem studied in this paper.
\begin{problem}\label{problem}
Given the target-pursuit system governed by~\eqref{sys1}, where each pursuer follows its dynamics.
Design the safe control law $u_i$ for all $i\in\mathcal{I}_x$ such that safety sets $\C_{u,i}$ in~\eqref{C3}, $\C_{c,i}$ in~\eqref{C1}, and $\C_{s,i}$ in~\eqref{C2} forward invariant for all $t\geq 0$.
\end{problem}

\section{TRUST-UP Framework Design}
\label{sec-method}
    In this section, we present the design of the TRUST-UP algorithm to solve Problem~\ref{problem}. Figure~\ref{fig:liuchengtu} illustrates the overall architecture of our framework, which integrates model-free RL with CBF-based safety filtering to ensure human-aware safety boundaries. The proposed approach begins by transforming the original system into an augmented system, reformulating the input-constrained problem into an output-constrained problem. To ensure forward invariance of the safety sets $\C_u$, $\C_c$, and $\C_s$ for the pursuers, an adaptive CBF-based safety filter is developed, providing robustness against external disturbances. Finally, the CBF-based safety filter is integrated with a model-free RL policy into the TRUST-UP algorithm. We demonstrate that the TRUST-UP algorithm guarantees feasibility and ensures sensing safety, collision avoidance, and compliance with input constraints for all pursuit UAVs.
\begin{figure}[!ht]
    \centering
    \includegraphics[width=1.05\linewidth]{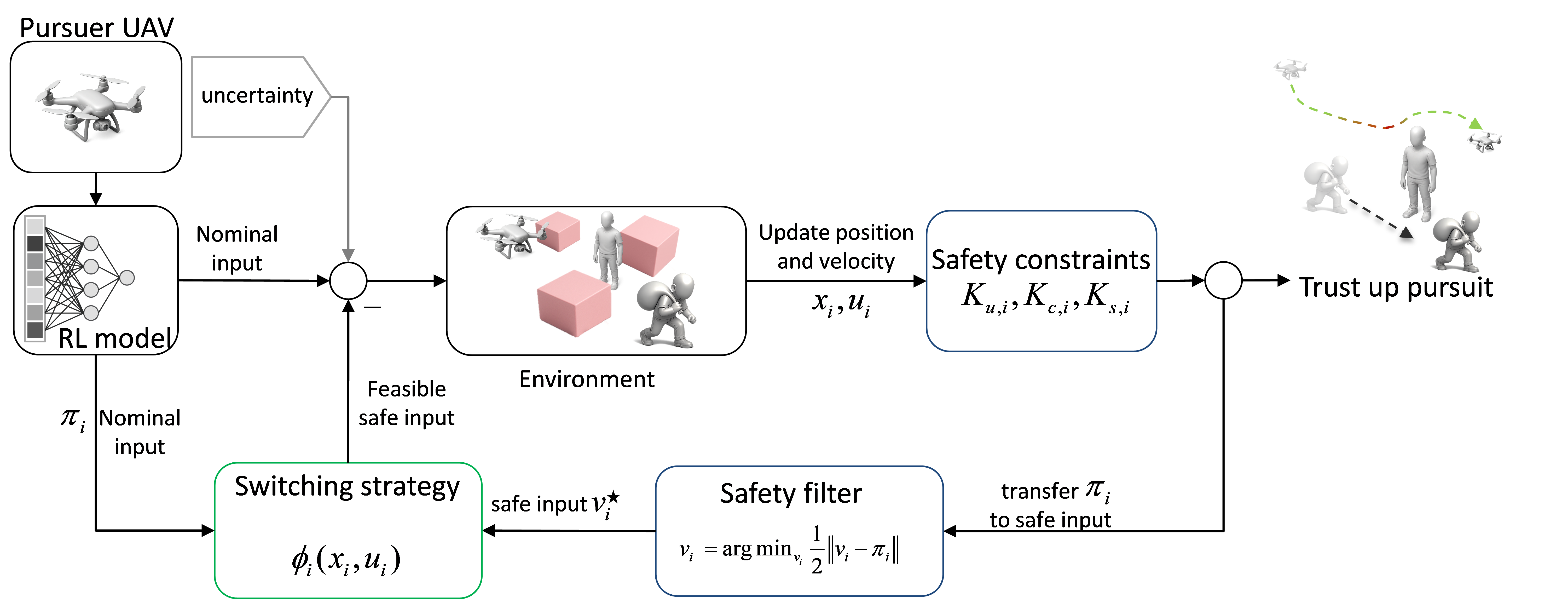}
    \caption{Architecture of the TRUST-UP framework for safe aerial pursuit in human-populated environments. The system integrates a trained RL model that generates nominal control inputs $\pi_i$ with a CBF-based safety filter that enforces human-aware safety constraints. The RL model operates based on the pursuer UAV's state, while the safety filter monitors real-time environmental conditions including human positions with expanded trust radii. Three safety constraints are formulated as a QP problem. The switching strategy improves the practical feasibility of the safety filter
by selecting between the nominal RL action and the filtered command under the
stated admissibility assumptions.}
    \label{fig:liuchengtu}
\end{figure}
\subsection{System transform}\label{subsec-method-1}
To provide time-varying bounds on the actual control variable 
$u$, it is natural to place an integrator in the feedback path to augment the system's output as the input of an auxiliary system. Specifically, by introducing an integrator for the control input $u$, the original first-order system in~\eqref{sys1} is transformed into a second-order system, where the time derivative of $u$ is treated as a new auxiliary input $v$. As a result, the system can now be described as:
\begin{equation}\label{sys-aug} \begin{aligned} \dot{x}_i(t) &= f(x_i(t)) + g(x_i(t))u_i(t) + Y(x_i(t)) \theta\\ \dot{u}_i(t) &= v_i(t) + Z(x_i(t))\xi, \end{aligned}\end{equation}
where $v_i \in \mathbb{R}^ n$ is an auxiliary input vector. $\xi\in\R$ is the unknown uncertainty, and $Z\in\R^m$ is a local Lipschitz functions. {This system augmentation in~\eqref{sys-aug} incorporates actuator inertia into the safety constraints, thereby alleviating the actuator overload problem caused by abrupt input changes in traditional first-order CBFs~\cite{deng2025safety,xiao2021high}. In engineering practice, the inherent trade-off between transient response speed and control smoothness introduced by this formulation can be systematically managed by tuning the parameters of the Class $\mathcal{K}$ functions~\cite{xiao2021high,xu2018constrained}.}
\begin{remark}
The uncertainty in the system~\eqref{sys-aug} will always be regarded as sensor faults polluting all the states~\cite{remark-why-d,hung2022image}, which addresses that all the states including $u_i$ are polluted due to sensor faults coinciding in each system state, which is of theoretical and practical significance. 
\end{remark}
{To facilitate our approach, we make the following assumption of the uncertainty in~\eqref{sys-aug}, which represents a common engineering practice to handle real-world uncertainties. In practice, even when disturbances such as wind and sensor noise are non-Gaussian, they are often enclosed by bounded convex sets derived from physical and hardware limitations~\cite{ames2020integral,deng2025safety,patil2022adaptive}, so that robust safety constraints remain computationally tractable.}

\begin{assumption}\label{amp-uncertainbound}
The uncertain parameters $\theta$ and $\xi$ belong to the known convex polytope $\theta\in \Theta\subset \R^p$ and $\xi\in \Xi\subset \R^p$, so that there exists the maximum possible estimation error $\varepsilon_\theta, \varepsilon_\xi\in\R^+$ such that $\norm{\theta-\hat \theta}\leq \varepsilon_\theta$ and $\norm{\xi-\hat \xi}\leq \varepsilon_\xi$.
\end{assumption}
{Although the parameter vectors $\theta$ and $\xi$ in~\eqref{sys-aug} are constant, the total lumped disturbances $Y(x_i(t))\theta$ and $Z(x_i(t))\xi$ in~\eqref{sys-aug} remain state-dependent and time-varying through the regressors $Y(x_i(t))$ and $Z(x_i(t))$, thereby providing a tractable approximation of dynamic uncertainties such as urban wind.
}
Then, to deal with the uncertainty in the augmented system~\eqref{sys-aug}, the following Lemma, adapted from~\cite{cohen2022high}, provides a verifiable bound on the estimates $\hat \theta$ and $\hat \xi$, based on the following adaptive law.
{Note that the total combined disturbance in~\eqref{sys1}
functions as a time-varying signal, which adequately represents the dynamic nature of real-world urban wind disturbances.}
\begin{lemma}\label{lem-updatelaw}
Let $t_j\in[t-\Delta T,t]\subset\R$, $j\in\{1,\ldots,M\}$ be the $j$-th sampling time. $M$ represents the total sampling times.  
Given the adaptive law for system~\eqref{sys-aug} as follows
\begin{equation}\label{update law theta}
\begin{multlined}
\dot{\hat \theta}(t)=\gamma_\theta\sum_{j=1}^M\mathcal{Y}_i(t_j)^\top\Big(\Delta x_{i,j}(t_j)-\mathcal{Y}_i(t_j)\hat{\theta}(t_j)\\-\mathcal{F}_i(t_j)-\mathcal{G}_i(t_j)\Big),
\end{multlined}
\end{equation}
\begin{equation}\label{update law xi}
\dot{\hat \xi}(t)=\gamma_\xi\sum_{j=1}^M\mathcal{Z}_i(t_j)^\top\Big(\Delta x_{i,j}(t_j)\!-\!\mathcal{Z}_i(t_j)\hat{\xi}\!-\!\mathcal{V}_i(t_j)\Big),
\end{equation}
where, $\Delta x_{i,j}(t) = x_i(t_j)-x_i(t_j-\Delta T)$,
$\mathcal{F}_i(t_j)=\int_{t_j-\Delta T}^{t_j}f(x_i(\tau))d\tau $, $\mathcal{Y}_i(t_j)=\int_{t_j-\Delta T}^{t_j}Y(x_i(\tau))d\tau $,  $\mathcal{G}_i(t_j) = \int_{t_j-\Delta T}^{t_j}g(x_i(\tau))u_i(\tau)d\tau$, $\mathcal{Z}_i(t_j) = \int_{t_j-\Delta T}^{t_j}Z(x_i(\tau))d\tau$ and $\mathcal{V}_i(t_j) = \int_{t_j-\Delta T}^{t_j}v_i(\tau)d\tau$.
$\gamma_\theta\in\R^+$ and $\gamma_\xi\in\R^+$ are adaptation gains. Provided Assumption~\ref{amp-uncertainbound} holds and $\hat \theta(0)\in\Theta$, $\hat \xi(0)\in\Xi$. then the parameter estimation error $\tilde{\theta}$ and $\tilde{\xi}$ are bounded by two positive constant $\bar \nu\in\R^+$ and $\bar \eta\in\R^+$  
\begin{equation}\label{error bound}
\begin{aligned}
\|\tilde{\theta}(t)\|&\leq\bar \nu\\
\|\tilde{\xi}(t)\|&\leq\bar \eta,
\end{aligned}
\end{equation}
for all $t\geq 0$.
\end{lemma}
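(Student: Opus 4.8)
The plan is to recast the two adaptation laws as autonomous \emph{error dynamics} and then certify boundedness with a single quadratic Lyapunov function, in the spirit of concurrent/composite-learning adaptive control. The first step is to expose the integral regression identity that motivates the particular form of~\eqref{update law theta}--\eqref{update law xi}. Integrating the first line of the augmented dynamics~\eqref{sys-aug} over the sampling window $[t_j-\Delta T,t_j]$ and invoking the definitions of $\mathcal{F}_i$, $\mathcal{G}_i$, and $\mathcal{Y}_i$ yields the \emph{exact} relation
\begin{equation}\label{prop-reg}
\Delta x_{i,j}(t_j)-\mathcal{F}_i(t_j)-\mathcal{G}_i(t_j)=\mathcal{Y}_i(t_j)\,\theta,
\end{equation}
and the analogous integration of the $u$-channel dynamics produces a second identity whose residual equals $\mathcal{Z}_i(t_j)\,\xi$. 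The point of this step is that it removes any need to differentiate $x_i$ or $u_i$: the regressors $\mathcal{Y}_i,\mathcal{Z}_i$ are assembled only from measured signals over a past window. Substituting these identities into the parenthesized residuals of~\eqref{update law theta}--\eqref{update law xi} collapses each to a pure function of the estimation error, namely $\mathcal{Y}_i(t_j)\big(\theta-\hat\theta(t_j)\big)$ and $\mathcal{Z}_i(t_j)\big(\xi-\hat\xi\big)$.

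Next I would introduce the errors $\tilde\theta=\theta-\hat\theta$ and $\tilde\xi=\xi-\hat\xi$. Since $\theta$ and $\xi$ are constant, $\dot{\tilde\theta}=-\dot{\hat\theta}$ and $\dot{\tilde\xi}=-\dot{\hat\xi}$, so after the substitution the laws read
\begin{equation}\label{prop-errdyn}
\dot{\tilde\theta}=-\gamma_\theta\,\Lambda_\theta(t)\,\tilde\theta,\qquad
\dot{\tilde\xi}=-\gamma_\xi\,\Lambda_\xi(t)\,\tilde\xi,
\end{equation}
with Gram matrices $\Lambda_\theta(t)=\sumjm \mathcal{Y}_i(t_j)^\top\mathcal{Y}_i(t_j)\succeq 0$ and $\Lambda_\xi(t)=\sumjm \mathcal{Z}_i(t_j)^\top\mathcal{Z}_i(t_j)\succeq 0$, where the sampled error $\hat\theta(t_j)$ has been replaced by $\hat\theta(t)$ over the short window. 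The candidate
\begin{equation}\label{prop-lyap}
V=\frac{1}{2\gamma_\theta}\norm{\tilde\theta}^2+\frac{1}{2\gamma_\xi}\norm{\tilde\xi}^2
\end{equation}
then satisfies $\dot V=-\tilde\theta^\top\Lambda_\theta\tilde\theta-\tilde\xi^\top\Lambda_\xi\tilde\xi\leq 0$, so $V$ is non-increasing and $\norm{\tilde\theta(t)}$, $\norm{\tilde\xi(t)}$ are uniformly bounded by their initial values, which fixes $\bar\nu,\bar\eta$ via $\hat\theta(0)\in\Theta$, $\hat\xi(0)\in\Xi$ and Assumption~\ref{amp-uncertainbound} (giving $\norm{\tilde\theta(0)}\leq\varepsilon_\theta$, $\norm{\tilde\xi(0)}\leq\varepsilon_\xi$).

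The main obstacle I anticipate is the discrete, moving-window sampling: the $\theta$-law updates with the \emph{sampled} error $\tilde\theta(t_j)$ rather than the instantaneous $\tilde\theta(t)$, so the sign-definite form~\eqref{prop-errdyn} is only approximate. Making this rigorous requires bounding the gap $\hat\theta(t)-\hat\theta(t_j)$ on $[t-\Delta T,t]$ (using Lipschitz continuity of $\hat\theta$ together with uniform bounds on $\mathcal{Y}_i,\mathcal{Z}_i$ that follow from Assumption~\ref{asm-p dot p}) and showing the induced cross terms are dominated by $-\tilde\theta^\top\Lambda_\theta\tilde\theta$; alternatively, appending a projection operator that keeps $\hat\theta(t)\in\Theta$ and $\hat\xi(t)\in\Xi$ for all $t$ delivers~\eqref{error bound} directly from the polytope diameter while leaving $\dot V\le 0$ intact. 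If in addition a finite-excitation condition $\lambda_{\min}\Lambda_\theta>0$ holds on the window, the same Lyapunov argument upgrades boundedness to exponential convergence of $\tilde\theta,\tilde\xi$ to zero, which is stronger than the bounded-error claim needed here.
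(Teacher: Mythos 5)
Your proposal is correct and matches the argument the paper itself relies on: the paper provides no in-text proof of this lemma, deferring instead to~\cite{cohen2022high} and (in the subsequent remark) to Theorem 2 of~\cite{parikh2019integral}, and those sources proceed exactly as you do --- the integral regression identity, the error dynamics $\dot{\tilde\theta}=-\gamma_\theta\Lambda_\theta(t)\tilde\theta$ with Gram matrix $\Lambda_\theta(t)\succeq 0$, a quadratic Lyapunov function yielding $\dot V\le 0$ and hence boundedness, with finite excitation upgrading this to the exponential bound $\bar\nu\geq\varepsilon_\theta e^{-\gamma_\theta\int_0^t\lambda_\theta(\tau)\diff\tau}$ quoted in the paper's remark. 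Your two flagged repairs are also the right reading of the statement: the residual $\Delta x_{i,j}$ in~\eqref{update law xi} must be interpreted as the $u$-channel increment $u_i(t_j)-u_i(t_j-\Delta T)$, and the sampled argument $\hat\theta(t_j)$ in~\eqref{update law theta} as the current estimate $\hat\theta(t)$ (consistent with the $\xi$-law), both evident typos carried over from the cited integral concurrent-learning formulation, so neither cross-term domination nor projection is actually needed once the laws are read as intended.
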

Using Assumption~\ref{amp-uncertainbound} and Theorem 2 in~\cite{parikh2019integral} to~\eqref{error bound}, the positive constants $\bar \nu$ and $\bar \eta$ in Lemma~\ref{lem-updatelaw} satisfy
\begin{equation}
\begin{aligned}
\bar\nu &\geq \|\varepsilon_\theta\|e^{-\gamma_\theta\int_0^t\lambda_\theta(\tau)d\tau}\\
\bar\eta &\geq\|{\varepsilon_\xi}\|e^{-\gamma\int_0^t\lambda_\xi(\tau)d\tau},
\end{aligned}
\end{equation}
for all $t\geq 0$, where 
\begin{equation}
\lambda_\theta(t)=E_{\min}\sum_{j=1}^M\mathcal{Y}_i(t_j)^\top\mathcal{Y}_i(t_j),
\end{equation}
\begin{equation}
\lambda_\xi(t)=E_{\min}\sum_{j=1}^M\mathcal{Z}_i(t_j)^\top\mathcal{Z}_i(t_j).
\end{equation}
The above lemma implies that, under the updated law in~\eqref{update law theta} and~\eqref{update law xi},  the parameter estimation error is always bounded by a known value provided the initial parameter estimates are selected such that $\hat \theta(0)\in\Theta$ and $\hat \xi(0) \in \Xi$. These bounded errors enable the development of a robust safety filter to ensure the forward invariance of safety sets $\C_{u,i}$ in~\eqref{C3}, $\C_{c,i}$ in~\eqref{C1}, and $\C_{s,i}$ in~\eqref{C2}, even in the presence of system uncertainties.
\subsection{Safety Filter Design}
{In this subsection, we design auxiliary input $v_i$  to regulate the system in~\eqref{sys-aug}, ensuring that $\C_{u,i}$ in~\eqref{C3}, $\C_{c,i}$ in~\eqref{C1}, and $\C_{s,i}$ in~\eqref{C2} remain forward invariant even in the presence of system uncertainties. To achieve this, we construct three adaptive CBFs for each safety requirement, leveraging the parameter estimation error derived in Lemma~\ref{lem-updatelaw}. Then, we formulate three CBF conditions as a QP problem, whose solution yields the safety filter’s output.}


Building on this framework, the input constraint for the pursuer is designed to address two key aspects. The first component is a static constraint, which serves as a baseline to filter extreme RL actions, ensuring smooth and safe control inputs. The second component is a dynamic, time-varying constraint that activates during critical scenarios (e.g. $\norm{x_i-q_i}-R_i\to 0$). 
Thus we define the following input-constrained CBF based on the relative positions between the pursuer and the target, denoted by $\zeta_i = x_i - q_i$, as follows
\begin{equation}\label{h3}
h_{u,i}(t) = \kappa(\zeta_i(t))^2 - u_i(t)^2,
\end{equation}
where,
\begin{equation}\label{kappa}
\kappa(\zeta_i(t)) = c + \dfrac{1}{(\zeta_i^\top \zeta_i-\ell^2)^2 + \epsilon},
\end{equation}
with positive constants $c > 0$, $\epsilon > 0$, and $\ell >0$. 
\begin{remark}
The formulation~\eqref{kappa} represents a time-varying input constraint that dynamically adapts based on the relative position of the $i$-th pursuer $x_i$ to its target $q_i$, ensuring that the input signal remains appropriately regulated, thereby guaranteeing safety even under critical scenarios such as target evasive maneuvers or simultaneous obstacle avoidance and tracking.
\end{remark}
Building on the augmented system in Section~\ref{subsec-method-1}, we now proceed to ensures the safety sets $\C_{c,i}$, $\C_{s,i}$ and $\C_{u,i}$ forward invariant for all $x_i\in\mathcal{X}$, $i\in\mathcal{I}_x$. Specifically, we design the auxiliary controller $v_i$ for system~\eqref{sys-aug}, such that CBFs defined by~\eqref{h3},~\eqref{h1} and~\eqref{h2} satisfy
$h_{c,i}(t)\geq 0$, $h_{s,i}(t)\geq 0$, and $h_{u,i}(t)\geq 0$ for all $i\in\mathcal{I}_x$ and $t\geq 0$. 
For simplicity of notations, we denote $f_i=f(x_i)$, $g_i=g(x_i)$, $Y_i=Y(x_i)$, $Z_i=Z(x_i)$, and omit the time variable~$t$ in the rest of the paper. 

The following lemma provides conditions under which the input constraint safety set $\mathcal{C}_{u,i}$ forward invariant.
\begin{lemma}\label{lem-kcbf3}
Suppose Assumption~\ref{asm-p dot p} and Assumption~\ref{amp-uncertainbound} holds for all $i\in\mathcal{I}_x$ and $t\geq 0$. Let the parameter estimation error be bounded as in Lemma~\ref{lem-updatelaw}, and define the set of admissible control inputs for the $i$-th pursuer as:
\begin{equation}\label{kcbf3}
\begin{aligned}
K_{u,i}(x_i,u_i,\hat \theta,\hat \xi)
=\sup_{v_i\in\R^m}\Big\{L_{u,i} -u_i^\top v_i + \alpha(h_{u,i})\geq 0
\Big\},
\end{aligned}
\end{equation}
where
\begin{equation}\label{Lcbf3}
\begin{aligned}
&L_{u,i}(x_i,u_i,\hat \theta,\hat \xi)\\
&\begin{multlined}[t]
=\dfrac{-2\kappa (\zeta_i^\top\zeta_i\!-\ell^2)}{(\zeta_i^\top\zeta_i\!-\!\ell^2)^2+\epsilon)^2}\zeta_i^\top (f_i\!+\!g_iu_i \!+\! Y_i\hat \theta) \!-\!u_i^\top Z_i\hat \xi
\\
-\norm{\dfrac{2\kappa (\zeta_i^\top\zeta_i-\ell^2)}{(\zeta_i^\top\zeta_i-\ell^2)^2+\epsilon)^2}\zeta_i}\left(\norm{Y_i}\nu+\rho_v\right) \\  -\norm{u_i^\top Z_i}\eta.
\end{multlined}
\end{aligned}
\end{equation}
Then, any Lipschitz continuous controller $v_i\in K_{u,i}(x_i,\hat \theta,\hat \xi)$ guarantees the forward invariance of set $\C_{u,i}$ in~\eqref{C3} regarding to system~\eqref{sys-aug} for all $x_i\in\R^n$, $\theta\in\Theta$, $\xi\in\Xi$ and $t\geq 0$.
\end{lemma}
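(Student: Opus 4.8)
The plan is to read Lemma~\ref{lem-kcbf3} as a \emph{robust} first-order CBF condition and reduce it to the standard forward-invariance result behind Definition~\ref{def-cbf} and Definition~\ref{def-Hs}. The structural point that makes this possible is the augmentation in~\eqref{sys-aug}: the input-constraint barrier $h_{u,i}=\kappa(\zeta_i)^2-\norm{u_i}^2$ from~\eqref{h3} depends on the actual input $u_i$ explicitly, whereas the new control $v_i$ enters only through $\dot u_i = v_i + Z_i\xi$. Hence $h_{u,i}$ has relative degree one with respect to $v_i$, so a single inequality of the form $\dot h_{u,i}+\alpha(h_{u,i})\ge 0$ is enough and no HOCBF apparatus is required. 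My goal is therefore to prove that every $v_i\in K_{u,i}$ of~\eqref{kcbf3} forces $\dot h_{u,i}+\alpha(h_{u,i})\ge 0$ along~\eqref{sys-aug} for all admissible uncertainties, and then to conclude $h_{u,i}(t)\ge 0$ for all $t\ge 0$ by a comparison argument.

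First I would differentiate $h_{u,i}$ along the augmented dynamics. With $\zeta_i=x_i-q_i$ and $s=\zeta_i^\top\zeta_i-\ell^2$, the chain rule applied to $\kappa(\zeta_i)^2$ in~\eqref{kappa} yields a coefficient proportional to $\kappa s/(s^2+\epsilon)^2$ multiplying $\zeta_i^\top\dot\zeta_i$, where $\dot\zeta_i=f_i+g_iu_i+Y_i\theta-\dot q_i$, while the $-\norm{u_i}^2$ term differentiates to $-2u_i^\top\dot u_i=-2u_i^\top(v_i+Z_i\xi)$. This is where $v_i$ appears linearly, matching the $-u_i^\top v_i$ contribution in~\eqref{kcbf3} up to the normalization fixed by the definition of $h_{u,i}$.

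The substantive step is to replace the unknown quantities by their certified bounds to recover the robust drift $L_{u,i}$ of~\eqref{Lcbf3}. I would split $\theta=\hat\theta+\tilde\theta$ and $\xi=\hat\xi+\tilde\xi$, retain the estimate parts (which produce the $\zeta_i^\top(f_i+g_iu_i+Y_i\hat\theta)$ and $-u_i^\top Z_i\hat\xi$ terms of $L_{u,i}$), and lower-bound the residual error terms together with the exogenous target velocity using Cauchy--Schwarz: $\zeta_i^\top Y_i\tilde\theta\ge-\norm{\zeta_i}\norm{Y_i}\bar\nu$, $-\zeta_i^\top\dot q_i\ge-\norm{\zeta_i}\rho_v$, and $-u_i^\top Z_i\tilde\xi\ge-\norm{u_i^\top Z_i}\bar\eta$, invoking $\norm{\tilde\theta}\le\bar\nu$ and $\norm{\tilde\xi}\le\bar\eta$ from Lemma~\ref{lem-updatelaw} and~\eqref{error bound}, and $\norm{\dot q_i}\le\rho_v$ from Assumption~\ref{asm-p dot p}. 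These worst-case summands are exactly the robustifying terms of $L_{u,i}$, so $\dot h_{u,i}\ge L_{u,i}-u_i^\top v_i$ uniformly over all $\theta\in\Theta$ and $\xi\in\Xi$. Consequently $v_i\in K_{u,i}$, i.e.\ $L_{u,i}-u_i^\top v_i+\alpha(h_{u,i})\ge 0$, implies $\dot h_{u,i}+\alpha(h_{u,i})\ge 0$.

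To finish, since $\alpha$ is class-$\mathcal{K}$ and $v_i$ is Lipschitz continuous (so~\eqref{sys-aug} admits unique solutions), the comparison lemma applied to the scalar system $\dot y=-\alpha(y)$ with $y(0)=h_{u,i}(0)\ge 0$ gives $h_{u,i}(t)\ge 0$ for all $t\ge 0$, which is precisely forward invariance of $\C_{u,i}$ in~\eqref{C3}. I expect the main obstacle to be making the uncertainty bounding uniform and sign-correct: one must check that the worst case over the polytopes $\Theta$ and $\Xi$ and over all admissible $\dot q_i$ is genuinely attained by the stated Cauchy--Schwarz bounds, and carefully track the chain-rule constants arising from $\kappa$ so that the coefficient in~\eqref{Lcbf3} matches the differentiated barrier. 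The remaining manipulations are routine.
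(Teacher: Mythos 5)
Your proposal follows essentially the same route as the paper's own proof: differentiate $h_{u,i}$ along the augmented dynamics~\eqref{sys-aug} (where $v_i$ enters with relative degree one), split $\theta$ and $\xi$ into estimates plus bounded errors, apply Cauchy--Schwarz with the bounds from Lemma~\ref{lem-updatelaw} and Assumption~\ref{asm-p dot p} to recover the robust drift $L_{u,i}$, and conclude from $\dot h_{u,i}+\alpha(h_{u,i})\geq 0$. If anything, yours is the more careful version: your sign convention ($-u_i^\top v_i$, coming from $\dot u_i = v_i + Z_i\xi$) is consistent with the lemma statement~\eqref{kcbf3}, whereas the paper's appendix writes $+u_i^\top v_i$, and you make explicit the final comparison-lemma step to forward invariance that the paper leaves implicit.
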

The proof of this lemma~\ref{lem-kcbf3} is provided in the Appendix.

With the safety of the control input ensured by Lemma~\ref{lem-kcbf3}, we now extend our analysis to derive safe conditions for collision avoidance. Given a position of $i$-th pursuer $x_i$, the objective is to ensure
\begin{equation}
\varkappa_k(t)  = \norm{x_i(t) - p_k(t)} \geq  r_i,
\end{equation}
for all $k\in \mathcal{I}_{k}$ and all $t\geq 0$.
The following lemma provides sufficient conditions of forward invariant of the safety set $\mathcal{C}_{c, i}$, which indicates the $i$-th pursuer is collision-free.
\begin{lemma}\label{lem-kcbf1}
Let $\iota_k\in\R^+$ be a positive constant for all $k\in\mathcal{I}_k$.
Suppose Assumption~\ref{asm-p dot p} and Assumption~\ref{amp-uncertainbound} holds. Let the parameter estimation error be bounded as in Lemma~\ref{lem-updatelaw}, and define the set of admissible control inputs for the $i$-th pursuer as:
\begin{equation}
\begin{aligned}
&K_{c,i,k}(x_i,\hat \theta,\hat \xi)\\
&\begin{multlined}[t]
=\sup_{v_i\in\R^m}2\Bigg\{L_{c,i,k}
- \varkappa_k^\top g_iv_i
 + \alpha_k(h_{i,k}(x_i))\geq 0
\Bigg\},
\end{multlined}
\end{aligned}
\end{equation}
where
\begin{equation}
\begin{aligned}
&L_{c,i,k}(x_i,\hat \theta,\hat \xi)\\
&\begin{multlined}[t]
=\varkappa_k^\top(\dot f_i + \dot g_iu_i) + \varkappa_k^\top\dot Y_i\hat \theta-\norm{\varkappa_k^\top\dot Y_i} \nu - \norm{\varkappa_k}\rho_a \\ + \varkappa_k^\top g_iZ_i\hat\xi -\norm{\varkappa_k^\top g_iZ_i}\eta\\
+ \iota_k \Big( \varkappa_k^\top(f_i + g_iu_i) + \varkappa_k^\top Y_i\hat \theta \\- \norm{\varkappa_k^\top Y_i}\nu\Big).
\end{multlined}
\end{aligned}
\end{equation}
Then any Lipschitz continuous controller $v_i\in  K_{c,i} = \bigcap_{k\in\mathcal{I}_k} K_{c,i,k}$
renders the safety of $\C_{c,i}$ forward invariant for all $k\in\mathcal{I}_k$, $\theta\in\Theta$, $\xi\in\Xi$ and all $t\geq 0$.
\end{lemma}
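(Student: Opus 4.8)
The plan is to treat the collision function $h_{c,i,k}$ in~\eqref{h1} as a high-order (relative-degree-two) CBF for the augmented system~\eqref{sys-aug}, and to build a \emph{robust} HOCBF condition whose worst case over all admissible uncertainties is exactly the inequality defining $K_{c,i,k}$. Writing $\varkappa_k = x_i - p_k$, I would first note that $h_{c,i,k}=\varkappa_k^\top\varkappa_k-r_i^2$ depends on the state only through $x_i$, whose dynamics in~\eqref{sys-aug} contain $u_i$ but not the decision variable $v_i$; the auxiliary input enters only one differentiation later, through $\dot u_i = v_i + Z_i\xi$. This confirms relative degree two and motivates the HOCBF chain $\hbar_1 = h_{c,i,k}$, $\hbar_2 = \dot\hbar_1 + \iota_k\hbar_1$ with the linear class-$\mathcal{K}$ function $s\mapsto \iota_k s$, so that the invariance certificate reduces to $\dot\hbar_2 + \alpha_k(\hbar_2)\ge 0$.

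I would then differentiate twice along~\eqref{sys-aug}: the first derivative is $\dot h_{c,i,k} = 2\varkappa_k^\top(f_i+g_iu_i+Y_i\theta-\dot p_k)$, and the second is $\fot\ddot h_{c,i,k} = \norm{\dot\varkappa_k}^2 + \varkappa_k^\top(\dot f_i+\dot g_iu_i+g_iv_i+g_iZ_i\xi+\dot Y_i\theta-\ddot p_k)$, in which $g_iv_i$ is the sole control channel. The curvature term $\norm{\dot\varkappa_k}^2\ge0$ may be discarded to preserve a valid lower bound while keeping the inequality affine (hence QP-compatible) in $v_i$. The key step is to robustify against the unknown $\theta,\xi$: substituting $\theta=\hat\theta+\tilde\theta$, $\xi=\hat\xi+\tilde\xi$ and applying Cauchy--Schwarz with the error bounds $\norm{\tilde\theta}\le\nu$, $\norm{\tilde\xi}\le\eta$ (the constants $\bar\nu,\bar\eta$ of Lemma~\ref{lem-updatelaw}) and the obstacle-motion bounds $\norm{\dot p_k}\le\rho_v$, $\norm{\ddot p_k}\le\rho_a$ from Assumption~\ref{asm-p dot p}, each uncertain contribution is bounded below, e.g. $\varkappa_k^\top\dot Y_i\theta\ge\varkappa_k^\top\dot Y_i\hat\theta-\norm{\varkappa_k^\top\dot Y_i}\nu$, $\varkappa_k^\top g_iZ_i\xi\ge\varkappa_k^\top g_iZ_i\hat\xi-\norm{\varkappa_k^\top g_iZ_i}\eta$, and $-\varkappa_k^\top\ddot p_k\ge-\norm{\varkappa_k}\rho_a$. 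Applying the same bounds to the first-order term scaled by $\iota_k$ and collecting everything reproduces the drift $L_{c,i,k}$ together with the control channel $\varkappa_k^\top g_iv_i$, so that $v_i\in K_{c,i,k}$ certifies the robust HOCBF inequality defining $K_{c,i,k}$ for every $\theta\in\Theta$, $\xi\in\Xi$. Invoking the HOCBF forward-invariance result~\cite{intro-multicbf-3} then renders the zero-superlevel set of $h_{c,i,k}$ forward invariant for the uncertain closed loop, and intersecting over $k\in\mathcal{I}_k$ as in~\eqref{C1} yields forward invariance of $\C_{c,i}$ under any Lipschitz $v_i\in K_{c,i}=\bigcap_{k}K_{c,i,k}$.

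I expect the main obstacle to be the uncertainty bookkeeping in the second derivative. Unlike the relative-degree-one setting of Lemma~\ref{lem-kcbf3}, $\theta$ now enters $\ddot h_{c,i,k}$ both directly through $\dot Y_i\theta$ and indirectly through the first-order term multiplied by $\iota_k$, while the new parameter $\xi$ appears through $g_iZ_i\xi$; each occurrence must be lower-bounded consistently so that the single certifying inequality of $K_{c,i,k}$ dominates the true second derivative uniformly over $\Theta\times\Xi$. Additional care is needed to justify discarding the nonnegative $\norm{\dot\varkappa_k}^2$ term and to track the sign conventions attached to $\varkappa_k$, and --- should one replace the HOCBF citation by a direct Nagumo-type argument --- to control the boundary behaviour of the chain $\hbar_1,\hbar_2$ on the relevant boundary.
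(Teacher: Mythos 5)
Your overall route is structurally the same as the paper's appendix proof: treat $h_{c,i,k}$ in~\eqref{h1} as a relative-degree-two HOCBF for the augmented system~\eqref{sys-aug}, differentiate twice, split $\theta=\hat\theta+\tilde\theta$ and $\xi=\hat\xi+\tilde\xi$, lower-bound the uncertain terms via Cauchy--Schwarz using $\nu,\eta$ from Lemma~\ref{lem-updatelaw} and $\rho_v,\rho_a$ from Assumption~\ref{asm-p dot p}, discard the nonnegative $\norm{\dot\varkappa_k}^2$ term (the paper drops it silently; you do so explicitly), and conclude by an HOCBF invariance argument. The paper does exactly this, with the chain written as $\hbar_{c,i,k}=\ddot h_{c,i,k}+\iota_k\dot h_{c,i,k}+\alpha(h_{c,i,k})$ and with $\varkappa_k$ misprinted as $\zeta_i$ throughout the appendix.

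The genuine gap is in your last step, the claim that ``collecting everything reproduces the drift $L_{c,i,k}$ together with the control channel $\varkappa_k^\top g_iv_i$, so that $v_i\in K_{c,i,k}$ certifies the robust HOCBF inequality.'' That identification fails on three counts. First, your certificate is $\dot\hbar_2+\alpha_k(\hbar_2)\geq 0$ with $\hbar_2=\dot h_{c,i,k}+\iota_k h_{c,i,k}$, whereas $K_{c,i,k}$ (and the paper's own proof) places the class-$\mathcal{K}$ term on $h_{c,i,k}$ itself; the two conditions are not equivalent and neither implies the other, and moreover $\hbar_2$ contains the unknown $\theta$ through $\dot h_{c,i,k}$, so your certificate needs its own robustification before it even defines a checkable constraint. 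Note that your form is the one for which the HOCBF theorem you cite actually applies: the lemma's literal condition $\ddot h+\iota_k\dot h+\alpha_k(h)\geq 0$ does not by itself force $h\geq 0$ (e.g.\ $\ddot h+\dot h+h=0$ admits decaying oscillatory solutions with $h(0)>0$, $\dot h(0)+h(0)>0$ that cross zero), so what your argument proves is forward invariance for a \emph{corrected} admissible set, not for $K_{c,i,k}$ as written. Second, the control-channel sign: your computation correctly yields $+\varkappa_k^\top g_iv_i$ in $\fot\ddot h_{c,i,k}$, since $v_i$ and $Z_i\xi$ both enter through $g_i\dot u_i$, yet $K_{c,i,k}$ carries $-\varkappa_k^\top g_iv_i$ while keeping $+\varkappa_k^\top g_iZ_i\hat\xi$; with that sign, membership in the lemma's set does not certify the inequality you derive (the paper's appendix shares this internal inconsistency). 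Third, the $\iota_k$-scaled first derivative contributes $-2\iota_k\varkappa_k^\top\dot p_k\geq-2\iota_k\norm{\varkappa_k}\rho_v$, which your bookkeeping retains but which is absent from $L_{c,i,k}$. Your calculus is sound and arguably more careful than the paper's, but as a proof of the statement as literally written, the asserted match between your robustified certificate and $K_{c,i,k}$ is where the argument breaks, and it needs to be either repaired (by redefining $K_{c,i,k}$) or honestly flagged as proving a modified lemma.
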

The proof of this lemma~\ref{lem-kcbf1} is provided in the Appendix.

Subsequently, the conditions to ensure the safety of the sensing range $\mathcal{C}_{s,i}$ in~\eqref{C2} are derived based on the relative position $\zeta_i=x_i-q_i$ between the pursuer and the target, as stated in Lemma~\ref{lem-kcbf2}.
\begin{lemma}\label{lem-kcbf2}
Let $\imath_t\in\R^+$ be a positive constant.
Suppose Assumption~\ref{asm-p dot p} and Assumption~\ref{amp-uncertainbound} holds for all $t\geq 0$. Let the parameter estimation error be bounded as in Lemma~\ref{lem-updatelaw}, and define the set of admissible control inputs for the $i$-th pursuer as:
\begin{equation}\label{Kcbf s}
\begin{aligned}
&K_{s,i}(x_i,\hat \theta,\hat \xi)\\
&\begin{multlined}[t]
=\sup_{v_i\in\R^m}2\Bigg\{L_{s,i}
 + \zeta_i^\top g_iv_i -\alpha_t(h_{s,i}(x_i))\geq 0\Bigg\}.
\end{multlined}
\end{aligned}
\end{equation}
where
\begin{equation}\label{Lcbf s}
\begin{aligned}
&L_{s,i}(\zeta_i,\hat \theta,\hat \xi)\\
&\begin{multlined}[t]
=
-\imath_t\zeta_i^\top (f_i + g_iu_i) - \zeta_i^\top (\imath_t Y_i+\dot Y_i)\hat \theta +\norm{\zeta_i}\rho_a
\\
\!-\!\zeta_i^\top(\dot f_i \!+\! \dot g_iu_i)\!-\! \zeta_i^\top g_iZ_i\hat\xi  \!+\! \norm{\zeta_i^\top g_i Z_i}\eta\\
\!+\!\left(\norm{\imath_t\zeta_i^\top Y}\!+\!\norm{\zeta_i^\top \dot Y_i}\right)\nu.
\end{multlined}
\end{aligned}
\end{equation}
Then any Lipschitz continuous controller $v_i\in K_{s,i}$ renders the safety of $\C_{c,i}$ forward invariant for all $x_i\in\R^n$, $\theta\in\Theta$, $\xi\in\Xi$ and all $t\geq 0$.
\end{lemma}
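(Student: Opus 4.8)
The plan is to treat this as a robust high-order CBF (HOCBF) problem, mirroring the structure of Lemma~\ref{lem-kcbf1} but for the ``stay-in-range'' constraint rather than collision avoidance. Since $h_{s,i} = R_i^2 - \zeta_i^\top\zeta_i$ with $\zeta_i = x_i - q_i$ has relative degree~$2$ with respect to the augmented system~\eqref{sys-aug} (the auxiliary input $v_i$ enters only through $\dot u_i$, hence only in $\ddot h_{s,i}$), I would invoke the HOCBF machinery of Section~\ref{sec-preliminary}. First I would compute $\dot h_{s,i} = -2\zeta_i^\top(f_i + g_i u_i + Y_i\theta - \dot q_i)$, confirming that $v_i$ is absent at first order, and then differentiate once more, substituting $\dot u_i = v_i + Z_i\xi$ and $\ddot x_i = \dot f_i + \dot g_i u_i + g_i\dot u_i + \dot Y_i\theta$. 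This exposes the $\zeta_i^\top g_i v_i$ term (the channel through which $v_i$ acts), together with the target-motion term $\zeta_i^\top\ddot q_i$ and the uncertainty terms $\zeta_i^\top\dot Y_i\theta$ and $\zeta_i^\top g_i Z_i\xi$.

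Next I would build the HOCBF sequence with a linear class-$\mathcal{K}$ map of slope $\imath_t$, i.e. $\hbar_1 = h_{s,i}$ and $\hbar_2 = \dot h_{s,i} + \imath_t h_{s,i}$, and require the robustified second-order condition $\dot\hbar_2 + \alpha_t(\hbar_2)\geq 0$. The $\imath_t$-scaled first-order contribution is precisely what produces the $\imath_t\zeta_i^\top(f_i + g_i u_i)$ and $\imath_t\zeta_i^\top Y_i\hat\theta$ terms appearing in $L_{s,i}$, while the pure second derivative produces the $\dot f_i,\dot g_i,\dot Y_i$ and $g_i Z_i$ terms. The standard HOCBF/comparison-lemma argument then closes the result: if $v_i$ keeps $\hbar_2 \geq 0$, then $\dot h_{s,i} \geq -\imath_t h_{s,i}$, and the comparison lemma forces $h_{s,i}(t)\geq 0$ for all $t$, i.e. $\C_{s,i}$ is forward invariant.

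The robustness step is the crux. The true condition contains the unknown $\theta$ and $\xi$ and the unmeasured target acceleration $\ddot q_i$, so I would write $\theta = \hat\theta + \tilde\theta$ and $\xi = \hat\xi + \tilde\xi$ and split every uncertainty-bearing inner product into a computable part (in $\hat\theta,\hat\xi$) and an error part. Applying Cauchy--Schwarz together with the bounds $\norm{\tilde\theta}\leq\nu$, $\norm{\tilde\xi}\leq\eta$ from Lemma~\ref{lem-updatelaw} and $\norm{\ddot q_i}\leq\rho_a$ from Assumption~\ref{asm-p dot p}, each error term is replaced by its worst-case value, yielding the penalty terms $\bigl(\norm{\imath_t\zeta_i^\top Y} + \norm{\zeta_i^\top\dot Y_i}\bigr)\nu$, $\norm{\zeta_i^\top g_i Z_i}\eta$, and $\norm{\zeta_i}\rho_a$ in $L_{s,i}$. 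By construction $L_{s,i} + \zeta_i^\top g_i v_i$ is then a lower bound on the genuine quantity $\dot\hbar_2 + \alpha_t(\hbar_2)$, so any $v_i \in K_{s,i}$ makes the true HOCBF inequality hold simultaneously for every $\theta\in\Theta$ and $\xi\in\Xi$, which is exactly what forward invariance under uncertainty demands.

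The main obstacle I anticipate is the bookkeeping of signs in the worst-case bounding: because $h_{s,i}$ is a keep-close constraint built from $R_i^2 - \norm{\zeta_i}^2$ (opposite in sign to the collision term $\norm{x_i-p_k}^2 - r_i^2$ of Lemma~\ref{lem-kcbf1}), I must ensure that each inequality used to dominate a $\tilde\theta$, $\tilde\xi$, or $\ddot q_i$ term points in the direction that \emph{lower}-bounds $\dot\hbar_2 + \alpha_t(\hbar_2)$, so that meeting the $K_{s,i}$ condition implies, rather than merely approximates, the true condition. A secondary technical point is verifying Lipschitz continuity and well-posedness of the resulting $v_i$, which is what legitimizes the comparison-lemma step and guarantees existence of the safe solution for all $t\geq 0$.
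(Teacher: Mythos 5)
Your overall strategy is exactly the paper's: the paper omits this proof as ``analogous to Lemma~\ref{lem-kcbf1},'' and your plan---build the relative-degree-2 robust HOCBF chain, split $\theta=\hat\theta+\tilde\theta$ and $\xi=\hat\xi+\tilde\xi$, dominate the error terms via Cauchy--Schwarz with the bounds $\nu$, $\eta$, $\rho_a$ from Lemma~\ref{lem-updatelaw} and Assumption~\ref{asm-p dot p}, then close with a comparison-lemma argument---is a faithful transplant of the Appendix proof of Lemma~\ref{lem-kcbf1} to $h_{s,i}$. Your instinct that sign bookkeeping is the crux is also well founded; indeed, a correctly executed derivation yields robustness penalties that \emph{tighten} the constraint (i.e. enter with signs opposite to those printed in~\eqref{Lcbf s}, and with $+\alpha_t(h_{s,i})$ rather than $-\alpha_t(h_{s,i})$ in~\eqref{Kcbf s}, matching the pattern of Lemma~\ref{lem-kcbf1}), so carrying your plan through would surface and resolve those inconsistencies.

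However, there is one concrete point where ``analogous'' genuinely breaks down, and your term inventory misses it. Differentiating $\dot h_{s,i}=-2\zeta_i^\top\dot\zeta_i$ gives
\begin{equation*}
\ddot h_{s,i} \;=\; -2\norm{\dot\zeta_i}^2 \;-\; 2\zeta_i^\top\ddot\zeta_i,
\end{equation*}
whereas you list only the terms coming from $\zeta_i^\top\ddot\zeta_i$ (the $v_i$ channel, $\ddot q_i$, $\dot Y_i\theta$, $g_iZ_i\xi$). In Lemma~\ref{lem-kcbf1} the corresponding quadratic term is $+2\norm{\dot\varkappa_k}^2\geq 0$, so the paper may silently drop it and still have a valid lower bound; for the keep-in constraint the sign flips, making $-2\norm{\dot\zeta_i}^2$ a \emph{negative} contribution that cannot be discarded. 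Unless you retain it or dominate it---e.g. via $\norm{\dot\zeta_i}\leq\norm{f_i+g_iu_i+Y_i\hat\theta}+\norm{Y_i}\nu+\rho_v$, using Assumption~\ref{asm-p dot p} and the input bound enforced by $\C_{u,i}$---the quantity $L_{s,i}+\zeta_i^\top g_iv_i$ is not a true lower bound on $\dot\hbar_2+\alpha_t(\hbar_2)$, so membership in $K_{s,i}$ does not imply the genuine HOCBF inequality and the forward-invariance conclusion does not follow. This is precisely the one step where the sensing constraint is not a sign-flipped copy of the collision constraint, and it is the kind of term your own ``sign bookkeeping'' caveat should have caught.
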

Since the proof of Lemma~\ref{lem-kcbf2} is analogous to that of Lemma~\ref{lem-kcbf1}, it is omitted here.
The above Lemmas~\ref{lem-kcbf3}, \ref{lem-kcbf1}, and \ref{lem-kcbf2} collectively establish that the proposed safety constraints—collision avoidance, sensing range, and input constraint—are forward invariant under appropriate control input designs for each pursuer in the target-pursuit system~\eqref{sys-aug}. Building on these results, we construct a CBF-QP formulation that integrates the trained RL action $\pi_i(t)$ as the nominal control input for pursuer $i$ at time $t$. This formulation ensures compliance with the safety constraints established in Lemmas~\ref{lem-kcbf3}, \ref{lem-kcbf1}, and \ref{lem-kcbf2}, yielding the safety adaptive controller $v_i^\star$, as follows:

\textbf{CBF-QP Problem for Pursuer $i$}
\begin{equation}\label{QP}
\begin{aligned}
&v_i^\star(t)=\mathop{\arg\min}_{v_i\in\mathbb{R}^{m}}\dfrac{1}{2}\left\|v_i(t)-\pi_{i}(t)\right\|^2 \\
&\text{s.t. }\\
&v_i\!\in\! K_{u,i}(x_i,u_i,\hat \theta,\hat \xi)\!\bigcap \!K_{c,i}(x_i,\hat \theta,\hat \xi)\!\bigcap\! K_{s,i}(x_i,\hat \theta,\hat \xi).
\end{aligned}
\end{equation}

Given the above CBF-QP problem, it is essential to investigate the feasibility of the proposed safety filter. Specifically, we aim to integrate the RL control input $\pi_i$ with the safety filter into a unified algorithm, ensuring that when the RL input violates safety conditions, the safety filter is activated and remains feasible for each safety requirement.

\subsection{TRUST-UP Algorithm}
In this part, we proposed the overall framework of our TRUST-UP algorithm for the pursuer control problem, which integrates model-free RL with a safety filter to ensure the satisfaction of the safety constraints $\C_{u,i}$ in~\eqref{C3}, $\C_{c,i}$ in~\eqref{C1} and $\C_{s,i}$ in~\eqref{C2}. 
Specifically, there is a switch strategy in our TRUST-UP algorithm that switches between the RL action $\pi_i$ and the safety filter in~\eqref{QP}, depending on whether the safety filter is activated. To formalize the feasibility of this framework, we present the main theorem of this paper, which proves the switch strategy ensures the CBF-QP problem satisfies the safety conditions in Lemmas~\ref{lem-kcbf3}, \ref{lem-kcbf1}, and \ref{lem-kcbf2}, whenever the safety filter is activated. Consequently, the proposed TRUST-UP algorithm guarantees that the pursuer control problem stated in Problem~\ref{problem} is fully satisfied.


Given the above CBF-QP problem in~\eqref{QP}, it is necessary to develop a feasible continuous control law to satisfy each safety condition. To achieve this, we introduce a switch strategy $\pi_i$ that activates the safety filter in~\eqref{QP} by switching from the nominal RL action $\pi_i$ whenever a CBF constraint is violated. For this purpose, we define the following regions:
\begin{equation}\label{asm3need}
\begin{aligned}
&\Pi_{u,i} \!=\! \Big\{u_i\in \C_{u,i}\mid L_{u,i} + \alpha h_{u,i} - u_i^\top \pi_i\geq 0 \Big\},\\
&\Pi_{s,i} \!=\! \Big\{x_i\in \C_{s,i}\mid L_{s,i} \!+\! \alpha h_{s,i} \!-\! \zeta_i^\top g_i \pi_i\geq 0 \Big\},\\
&
\Pi_{c,i} \!=\! \Big\{\!x_i\in \C_{c,i}\!\mid\!\bigcap_{k\in\mathcal{I}_k} \Big(L_{c,i,k} + \alpha h_{c,i,k} \\ 
&\qquad\qquad\qquad\qquad \qquad\qquad\quad- \varkappa_k^\top g_i \pi_i\geq 0 \Big)\Big\},
\end{aligned}
\end{equation}
which represents the case when nominal RL action $\pi_i$ is safe for rendering the safety sets forward invariant. Then we define
\begin{equation}
\begin{multlined}
\mathcal{R}_{1} \!=\! \Big\{(x_i,u_i) \mid (u_i\in \Pi_{u,i}) \cap \Big(x_i\in (\Pi_{c,i}\cap\Pi_{s,i}) \Big)\Big\},
\end{multlined}
\end{equation}
and 
\begin{equation}
    \mathcal{R}_2 = \lnot \mathcal{R}_{1}.
\end{equation}
Obviously, when $(x_i, u_i)\in \mathcal{R}_{1}$, the safety filter is inactive since $v_i=\pi_i$, and  when $(x_i, u_i)\in \mathcal{R}_{2}$, we solved the safe control $v_i^\star$ by QP problem in~\eqref{QP}. Therefore, we proposed the following hybrid control law:
\begin{equation}\label{hybrid}
\phi_i(x_i,u_i) =     \begin{cases} 
        \pi_i(x_i,u_i) & \text{if } (x_i, u_i)\in\mathcal{R}_1, \\ 
        v_i^\star (x_i,u_i)  & \text{if } (x_i, u_i)\in\mathcal{R}_2. 
    \end{cases}
\end{equation}
{To make the feasibility statement of~\eqref{QP} under~\eqref{asm3need}, we need the following assumption.}
{
\begin{assumption}
For each pursuer $i$, and for all states in the considered operating domain, the joint admissible set
\[
\mathcal{F}_i(x_i,u_i)
:=
K_{u,i}(x_i,u_i,\hat{\theta},\hat{\xi})
\cap
K_{c,i}(x_i,\hat{\theta},\hat{\xi})
\cap
K_{s,i}(x_i,\hat{\theta},\hat{\xi})
\]
is non-empty.
\end{assumption}
}
Now we can show the main Theorem for the feasibility of~\eqref{QP}.
\begin{theorem}\label{thm-feasible}
If the initial conditions of~\eqref{h1},~\eqref{h2} and~\eqref{h3} hold for 
\begin{equation} \hbar_{c,i,k}(0)>0, \quad \hbar_{s,i}(0)>0, \quad h_{u,i}(0)>0, 
\end{equation}
and given the hybrid control law $\phi_i(x_i,u_i)$ in~\eqref{hybrid}. 
Then the QP problem in~\eqref{QP} is feasible and has a unique solution of  $x_i\in\left(\bigcap_{k\in\mathcal{I}_k}\C_{c,i,k}\right)\cap \C_{s,i}$ and $u_i\in\C_{u,i}$ for all $i\in\mathcal{I}_x$ and $t\geq 0$.
\end{theorem}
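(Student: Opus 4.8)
\emph{Proof proposal.} The plan is to split the statement into three parts—nonemptiness (feasibility) of the QP constraint set, uniqueness of the minimizer, and forward invariance of the three safety sets—and to connect them through the switch law $\phi_i$ in~\eqref{hybrid} and the forward-invariance conclusions of Lemmas~\ref{lem-kcbf3}, \ref{lem-kcbf1}, and \ref{lem-kcbf2}.

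First I would record the structural observation that carries most of the work: each defining inequality of $K_{u,i}$, $K_{c,i,k}$, and $K_{s,i}$ is \emph{affine} in the decision variable $v_i$, since the terms $-u_i^\top v_i$, $-\varkappa_k^\top g_i v_i$, and $\zeta_i^\top g_i v_i$ are linear in $v_i$ while $L_{u,i}$, $L_{c,i,k}$, $L_{s,i}$, and the $\alpha(h_{\cdot})$ are constants with respect to $v_i$ at the current state. Hence the feasible set of~\eqref{QP} is a finite intersection of closed half-spaces, i.e. a closed convex polyhedron, while the objective $\fot\norm{v_i-\pi_i}^2$ is strictly convex with Hessian equal to the identity. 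Consequently, once nonemptiness is established, the minimizer is unique; and because every constraint is affine, the linearity constraint qualification holds automatically, so the KKT conditions are both necessary and sufficient for optimality. This is exactly what yields that the solution satisfies the KKT conditions for all safety constraints.

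Next I would establish feasibility by splitting on the switch regions. On $\mathcal{R}_1$ the conclusion is immediate: by the definitions of $\Pi_{u,i}$, $\Pi_{s,i}$, and $\Pi_{c,i}$, substituting $v_i=\pi_i$ into the three CBF inequalities reproduces precisely the conditions defining these regions, so $\pi_i$ is a common feasible point and is in fact the optimizer, attaining objective value zero. On $\mathcal{R}_2$ the point $\pi_i$ may be lost, and this is the crux. Here I would use the strict initial conditions $\hbar_{c,i,k}(0)>0$, $\hbar_{s,i}(0)>0$, $h_{u,i}(0)>0$ together with continuity to guarantee that $h_{u,i}$, $h_{c,i,k}$, and $h_{s,i}$ remain strictly positive on the interval of interest, so the three half-spaces vary continuously and stay non-degenerate. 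I would then invoke Lemmas~\ref{lem-kcbf3}, \ref{lem-kcbf1}, and \ref{lem-kcbf2}, each of which furnishes an admissible $v_i$ for its own constraint, and argue that the three half-spaces intersect. The decisive point is that all three constraint normals—$u_i$ for the input CBF and $g_i^\top\varkappa_k$, $g_i^\top\zeta_i$ for the position CBFs—are acted on through the common integrator-augmented input $v_i$ introduced in Section~\ref{subsec-method-1}; because the bound $\kappa(\zeta_i)$ in~\eqref{kappa} adaptively relaxes in the critical regime, the input half-space cannot conflict with the position half-spaces, so their common intersection is nonempty.

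I expect this feasibility step on $\mathcal{R}_2$ to be the main obstacle, since it is precisely the compatibility of the simultaneously active half-spaces that fails in naive CBF-QP designs; ruling this out is the reason for both the system augmentation and the adaptive bound $\kappa$. Finally, having shown that the executed control $\phi_i$—equal to $\pi_i$ on $\mathcal{R}_1$ and to the unique feasible $v_i^\star$ on $\mathcal{R}_2$—lies in $K_{u,i}\cap K_{c,i}\cap K_{s,i}$ at every instant, I would apply the forward-invariance conclusions of Lemmas~\ref{lem-kcbf3}, \ref{lem-kcbf1}, and \ref{lem-kcbf2}. Combined with the strict initial conditions, these deliver $u_i(t)\in\C_{u,i}$ and $x_i(t)\in\big(\bigcap_{k\in\I_k}\C_{c,i,k}\big)\cap\C_{s,i}$ for all $i\in\I_x$ and all $t\geq 0$, completing the argument.
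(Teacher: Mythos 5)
Your proposal has a genuine gap at exactly the step you yourself flag as the crux: nonemptiness of the joint feasible set $K_{u,i}\cap K_{c,i}\cap K_{s,i}$ on $\mathcal{R}_2$. You assert that ``because the bound $\kappa(\zeta_i)$ in~\eqref{kappa} adaptively relaxes in the critical regime, the input half-space cannot conflict with the position half-spaces,'' but this is a statement of design intent, not an argument: nothing in~\eqref{kappa}, and nothing in Lemmas~\ref{lem-kcbf3}, \ref{lem-kcbf1}, \ref{lem-kcbf2} (each of which concerns only its own constraint), implies that the three half-spaces share a common point. Mutual incompatibility of simultaneously imposed CBF constraints is precisely the failure mode in question, so it cannot be dismissed by appeal to the purpose of $\kappa$. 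Your preliminary step is also circular as stated: strict positivity of $h_{u,i}$, $h_{c,i,k}$, $h_{s,i}$ for all $t$ on ``the interval of interest'' is a consequence of the forward invariance you are trying to establish; continuity of the initial data only gives it on a short horizon, and the maximal-interval continuation argument that would repair this is not supplied.

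The paper's proof takes a different route that never requires joint-intersection nonemptiness in your sense: it verifies the KKT system case by case, with each constraint individually active while the others are inactive. In each case, stationarity plus complementary slackness yield an explicit closed-form candidate $v_i^\star$ (the projection onto the single active hyperplane), and the decisive ingredient---which your proposal never invokes---is the defining inequality of $\mathcal{R}_2$, e.g. $L_{c,i,k}+\alpha h_{c,i,k}-\varkappa_k^\top g_i\pi_i\leq 0$. That sign condition is exactly what forces the corresponding Lagrange multiplier to be nonnegative, i.e. it delivers dual feasibility, so the KKT conditions are met and the candidate is the unique optimum for that active set. The argument closes with a Lipschitz-continuity claim for $v_i^\star$ (full rank of the constraint matrix, via Hager's theorem) and continuity of the hybrid law across the switch. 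In short, the switch law is not merely a scheduler for when to run the QP---it is the source of the multiplier sign condition that makes the KKT verification succeed---and your proposal discards that mechanism while leaving its substitute (joint feasibility on $\mathcal{R}_2$) unproven.
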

\begin{proof}
To determine the feasibility of maintaining these constraints
simultaneously, we construct the Lagrangian function:
\begin{equation}
\begin{aligned}
&L_i ( v_i, \lambda_{i}, \lambda_{s,i}, \lambda_{u,i} )\\
&\begin{multlined}[t]
=\dfrac{1}{2}\| v_i-\pi_i \|^{2}-\sum_{k\in\mathcal{I}_k}\lambda_{i,k} \hbar_{c,i,k}-\lambda_{s,i} \hbar_{s,i}-\lambda_{u,i} \dot{h}_{u,i},
\end{multlined}
\end{aligned}
\end{equation}
where $\lambda_{i,k}, \lambda_{s,i}$, and $\lambda_{u,i}$ should always satisfy dual feasibility s.t. $\lambda_{i,k}, \lambda_{s,i}, \lambda_{u,i} \geq 0$ , $i\in\mathcal{I}$, are the Lagrange multipliers associated with each constraint.
According to the KKT conditions, the solutions of the QP program are optimal and unique if the following stationary condition
\begin{equation}\label{KKT-1}
\begin{aligned}
&\frac{\partial L_i}{\partial v_i}\\
&\begin{multlined}[t]
=(v_i \!-\! \pi_i) \!-\! \sum_{k\in\mathcal{I}_k} \lambda_{c,i,k} \frac{\partial \hbar_{c,i,k}}{\partial v_i} \!-\! \lambda_{s,i} \frac{\partial \hbar_{s,i}}{\partial v_i} \!-\! \lambda_{u,i} \frac{\partial \dot h_{u,i}}{\partial v_i}
\end{multlined}\\
&=0,
\end{aligned}
\end{equation}
and complementary slackness
\begin{equation}\label{KKT-complementary slackness}
\begin{aligned}
\lambda_{s,i} (L_{s,i} + \alpha h_{s,i} + \zeta_i^\top g_iv_i) \!&=\!0 , \\
\lambda_{u,i} (L_{u,i} + \alpha h_{u,i} + u_i^\top v_i) \!&=\!0 \\
\sum_{k\in\mathcal{I}_k}\!\lambda_{c,i,k}(L_{c,i,k} \!+\! \alpha h_{u,i}\!+\! \varkappa_k^\top g_iv_i) \!&=\! 0, \\
\end{aligned}
\end{equation}
hold with respect to $v_i$.
Additionally, we define the active set $\mathcal{A}_j$, $j=\{1,2,3\}$ for the QP problem in~\eqref{QP} as the set of constraints that are active at a given solution $v_i^\star$, i.e., the constraints for which the equality $\hbar_{c,i,k}=0$, $\hbar_{s,i}=0$, or ${\dot h}_{u,i}=0$ holds. For all inactive constraints, the corresponding inequality holds $\hbar_{c,i,k}>0$, $\hbar_{s,i}>0$, and ${\dot h}_{u,i}>0$.

To analyze the feasibility of the QP problem, we individually consider each constraint being active while others are inactive. Specifically,  there are three cases to be considered:

\textit{Case 1: $\mathcal{A}_1 = \{k\in\mathcal{I}_k\mid \hbar_{c,i,k}(v_i)=0\}$.}

For $\varkappa_k\in\mathcal{R}_2$ the KKT conditions result in
\begin{subequations}\label{kkt-1}
    \begin{align}
        v_i-\pi_i-\sum_{k\in\mathcal{I}_k}\lambda_{i,k} \varkappa_k^\top g_i &= 0 \label{kkt-1a} \\
        \sum_{k\in\mathcal{I}_k}\lambda_{c,i,k}(L_{c,i,k} + \alpha h_{u,i}+ \varkappa_k^\top g_iv_i) &= 0 \label{kkt-1b} \\
        \lambda_{c,i,k} &\geq 0 \label{kkt-1c}
    \end{align}
\end{subequations}
for all $k\in\mathcal{I}_k$. Using hybrid control law in~\eqref{hybrid}, we have 
\begin{equation}\label{lghpi-1}
L_{c,i,k} + \alpha h_{c,i,k}(x_i) - \varkappa_k^\top g_i \pi_i\leq 0 
\end{equation}
Apply~\eqref{kkt-1b} to~\eqref{kkt-1a}, we obtains
\begin{equation}\label{opt-1}
v_i = -\dfrac{L_{c,i,k} + \alpha h_{c,i,k}(x_i)}{\norm{\varkappa_k^\top g_i}^2} g_i^\top\varkappa_k,
\end{equation}
and the stationarity conditions of~\eqref{kkt-1a}
\begin{equation}\label{temp1}
-\dfrac{L_{c,i,k} + \alpha h_{c,i,k}(x_i)}{\norm{\varkappa_k^\top g_i}^2} g_i^\top\varkappa_k -\pi_i-\sum_{k\in\mathcal{I}_k}\lambda_{i,k} \varkappa_k^\top g_i = 0.
\end{equation}
Using~\eqref{lghpi-1} to~\eqref{temp1}, the following inequality is obtained
\begin{equation}
\begin{multlined}
\sum_{k\in\mathcal{I}_k}\lambda_{i,k} = -\Bigg( 
\dfrac{L_{c,i,k} + \alpha h_{c,i,k}(x_i)}{\norm{\varkappa_k^\top g_i}^2} g_i^\top\varkappa_k + \pi_i \Bigg)\dfrac{g_i^\top\varkappa_k}{\norm{\varkappa_k^\top g_i}^2}
\end{multlined}
\end{equation}

\begin{equation}\begin{aligned}
&\sum_{k\in\mathcal{I}_k}\lambda_{i,k} \\
&\begin{multlined}[.85\linewidth]
= -\Bigg( 
\dfrac{L_{c,i,k} + \alpha h_{c,i,k}(x_i)}{\norm{\varkappa_k^\top g_i}^2} g_i^\top\varkappa_k + \pi_i \Bigg)\dfrac{g_i^\top\varkappa_k}{\norm{\varkappa_k^\top g_i}^2}
\end{multlined}
\\
&\begin{multlined}[.85\linewidth]
=
-\dfrac{L_{c,i,k} + \alpha h_{c,i,k}(x_i)}{\norm{\varkappa_k^\top g_i}^2}+ \pi_i\dfrac{g_i^\top\varkappa_k}{\norm{\varkappa_k^\top g_i}^2}
\end{multlined}
\\
&\begin{multlined}[.85\linewidth]
\geq
-\dfrac{L_{c,i,k} + \alpha h_{c,i,k}(x_i)}{\norm{\varkappa_k^\top g_i}^2} + \dfrac{L_{c,i,k} + \alpha h_{c,i,k}(x_i)}{\norm{\varkappa_k^\top g_i}^2} 
\end{multlined}\\
&\geq 0,
\end{aligned}\end{equation}
which implies the active set $\mathcal{A}_1 = \{k\in\mathcal{I}_k\mid \hbar_{c,i,k}(v_i)=0\}$  is valid for all $k\in\mathcal{I}_k$. Since
the KKT conditions are met and the unique solution for the QP problem in~\eqref{QP} is given by~\eqref{opt-1}, the QP problem
is feasible for all $x_i\in\mathcal{R}_2$ under the validated active set $\mathcal{A}_1$.

\textit{Case 2: $\mathcal{A}_2 = \{\hbar_{s,i}(v_i)=0\}$.}

For $\zeta_i\in\mathcal{R}_2$ the KKT conditions result in
\begin{subequations}\label{kkt-2}
    \begin{align}
        v_i-\pi_i-\lambda_{s,i} \zeta_i^\top g_i &= 0, \label{kkt-2a} \\
        \lambda_{s,i}(L_{s,i} + \alpha h_{s,i}+ \zeta_i^\top g_iv_i) &= 0, \label{kkt-2b} \\
        \lambda_{s,i} &\geq 0. \label{kkt-2c}
    \end{align}
\end{subequations}
Using hybrid control law in~\eqref{hybrid}, we have 
\begin{equation}\label{lghpi-2}
L_{s,i} + \alpha h_{s,i}(x_i) - \zeta_i^\top g_i \pi_i\leq 0 
\end{equation}
Apply~\eqref{kkt-2b} to~\eqref{kkt-2a}, we obtains
\begin{equation}\label{opt-2}
v_i = -\dfrac{L_{s,i} + \alpha h_{s,i,k}(x_i)}{\norm{\zeta_i^\top g_i}^2} g_i^\top\zeta_i,
\end{equation}
Using~\eqref{lghpi-2} to~\eqref{kkt-2a}, the following inequality is obtained
\begin{equation}\begin{aligned}
\lambda_{s,i} 
&= -\Bigg( 
\dfrac{L_{s,i} + \alpha h_{s,i}(x_i)}{\norm{\zeta_i^\top g_i}^2} g_i^\top\zeta_i + \pi_i \Bigg)\dfrac{g_i^\top\zeta_i}{\norm{\zeta_i^\top g_i}^2}
\\
&\geq 0,
\end{aligned}\end{equation}
which implies the active set $\mathcal{A}_i = \{\hbar_{s,i}(v_i)=0\}$  is valid for all $\zeta_i\in\mathcal{R}_2$. Since
the KKT conditions are met and the unique solution for the QP problem in~\eqref{QP} is given by~\eqref{opt-2}, the QP problem
is feasible for all $x_i\in\mathcal{R}_2$ under the validated active set $\mathcal{A}_2$.

\textit{Case 3: $\mathcal{A}_3 = \{\hbar_{u,i}(v_i)=0\}$.}

For $u_i\in\mathcal{R}_2$ the KKT conditions result in
\begin{subequations}\label{kkt-3}
    \begin{align}
        v_i-\pi_i-\lambda_{u,i} u_i &= 0, \label{kkt-3a} \\
        \lambda_{u,i}(L_{u,i} + \alpha h_{u,i}+ u_i^\top v_i) &= 0, \label{kkt-3b} \\
        \lambda_{u,i} &\geq 0. \label{kkt-3c}
    \end{align}
\end{subequations}
Using hybrid control law in~\eqref{hybrid}, we have 
\begin{equation}\label{lghpi-3}
L_{u,i} + \alpha h_{u,i}(x_i) - u_i^\top \pi_i\leq 0 
\end{equation}
Apply~\eqref{kkt-3b} to~\eqref{kkt-3a}, we obtains
\begin{equation}\label{opt-3}
v_i = -\dfrac{L_{u,i} + \alpha h_{u,i}(x_i)}{\norm{u_i}^2} u_i^\top,
\end{equation}
Using~\eqref{lghpi-3} to~\eqref{kkt-3a}, the following inequality is obtained
\begin{equation}\begin{aligned}\label{kkt-lambda2}
\lambda_{u,i} 
&= -\Bigg( 
\dfrac{L_{u,i} + \alpha h_{u,i}(x_i)}{\norm{u_i}^2} u_i^\top + \pi_i \Bigg)\dfrac{u_i^\top}{\norm{u_i}^2}
\\
&\geq 0,
\end{aligned}\end{equation}
which implies the active set $\mathcal{A}_i = \{\hbar_{u,i}(v_i)=0\}$  is valid for all $u_i\in\mathcal{R}_2$. Since
the KKT conditions are met and the unique solution for the QP problem in~\eqref{QP} is given by~\eqref{opt-2}, the QP problem
is feasible for all $u_i\in\mathcal{R}_2$ under the validated active set $\mathcal{A}_3$.

Lastly, we prove the Lipschitz continuity of $v_i^\star$ for all $(x_i,u_i)\in\mathcal{R}_2$. The coefficient matrix of the CBF-QP problem in~\eqref{QP} is given by 
\begin{equation}
V(x_i,u_i) = 
\begin{bmatrix}
 u_i^\top\\
 -\zeta_i^\top g_i\\
    -\sum_{k\in\mathcal{I}_k}\varkappa_k^\top g_i
\end{bmatrix}.
\end{equation}
Following Theorem 3.1 of~\cite{hager1979lipschitz} and given that the solution in~\eqref{opt-1},~\eqref{opt-2} and~\eqref{opt-3} are well defined as $\varkappa_k^\top g_i$, $\zeta_i^\top g_i$ and $u_i$ is bounded away from $0$ in any bounded subset of $\mathcal{R}_2$ as $h_{c,i,k}$, $h_{s,i}$ (relative degree $2$), and $h_{u,i}$ are CBFs for system~\eqref{sys-aug}. Therefore, the matrix $V(x_i,u_i)$ is full rank, ensureing $v_i^\star$ is Lipschitz continuous for all $(x_i,u_i)\in\mathcal{R}_2$.
Since the CBF-QP problem in~\eqref{QP} is utilized as part of the hybrid control law in~\eqref{hybrid}, and thus it is only employed for $(x_i,u_i)\in\mathcal{R}_2$ to compute a control input. For $(x_i,u_i)\in\mathcal{R}_1$, the nominal RL action is safe and the CBF constraints are inactive. Therefore, using Corollary 3.3 of~\cite{van2024unifying}, the hybrid control law $\phi(x_i,u_i)$ in~\eqref{hybrid}  is continuous for all $x_i\in\left(\bigcap_{k\in\mathcal{I}_k}\C_{c,i,k}\right)\cap \C_{s,i}$ and $u_i\in\C_{u,i}$ for any $i\in\mathcal{I}_x$ and $t\geq 0$ thus completing the proof.
\end{proof}

Building on Theorem~\ref{thm-feasible}, which establishes the feasibility of the hybrid control law in satisfying the safety constraints~\eqref{C1},~\eqref{C2} and~\eqref{C3},  we now present the framework of our TRUST-UP algorithm for our pursuit control problem. The TRUST-UP algorithm integrates the safety filter with model-free RL, ensuring that each pursuer can safely pursue its target with the sensing limitations and input constraints while avoiding collisions with obstacles and other UAVs. The complete framework is outlined in Algorithm~\ref{alg:safety_rl_execution}.
\begin{algorithm}[t]
\caption{TRUST-UP for UAV Target Pursuit}
\label{alg:safety_rl_execution}
\SetAlgoLined
\KwIn{Environment dynamics $ f $, $ g $; 

Safety constraints $ \mathcal{C}_u $, $ \mathcal{C}_t $, $ \mathcal{C}_c $; 

$m$ Static obstacles $\mathcal{O}=\{o_1,\ldots,o_m\}$; 

$n$ Pursuer $\mathcal{X}=\{x_1,\ldots,x_n\}$; 

$n$ Target $\mathcal{Q}=\{q_1,\ldots,q_n\}$; 

$n$ Trained RL policy $\pi_i, i=\{1,\ldots,n\}$;

Environment disturbances $\theta$, $\xi$.}
\KwOut{$x_i\in\left(\bigcap_{k\in\mathcal{I}_k}\C_{c,i,k}\right)\cap \C_{s,i}$ and $u_i\in\C_{u,i}$.}
\begin{enumerate}
    \item Initialize $n$ pursuers $\mathcal{X}(0) = \{x_1(0),\ldots, x_n(0)\}$, $n$ targets $\mathcal{Q}(0) = \{q_1(0)\ldots,q_n(0)\}$, and $m$ static obstacles $\mathcal{O}= \{o_1(0)\ldots,o_m(0)\}$.
    \item Set initial parameters for each pursuer $x_i(0)$, $u_i(0)$, $f_i(0)$, $g_i(0)$, $Y_i(0)$, $Z_i(0)$, $\hat{\theta}(0)$, $\hat{\xi}(0)$, $i\in\mathcal{I}_x$.
    \item For each time step $t$:

        \begin{enumerate}
            \item Get $\mathcal{P}_{-i}=\{p_1\ldots p_{2n+m-1}\}=(\mathcal{X}\backslash x_i)\cup \mathcal{Q}\cup \mathcal{O}$,
            \item Compute relative positions:
            \begin{itemize}
                \item Target distance $\zeta_i = x_i - q_i$,
                \item Relative positions to other targets, pursuers, and obstacles $\varkappa_k = x_i-p_k$.
            \end{itemize}
            \item Evaluate hybrid control law $\phi(x_i, u_i)$ using~\eqref{hybrid}.
            \item \textbf{If} safety filter is activated:
            \begin{itemize}
                \item Compute auxiliary control input $v_i^\star$ by solving the CBF-QP problem in~\eqref{QP}.
                \item Auxiliary control input $v_i = v_i^\star$.
            \end{itemize}
            \item \textbf{Else}: Auxiliary control input $v_i = \pi_i(x_i)$.
            \item Compute $x_i$ and $u_i$ from $v_i$ by dynamics~\eqref{sys-aug}.
        \end{enumerate}
    \end{enumerate}
\end{algorithm}

\section{Simulation Results}
\label{sec-experiment}
    \begin{figure}
    \centering
    \includegraphics[width=0.6\linewidth]{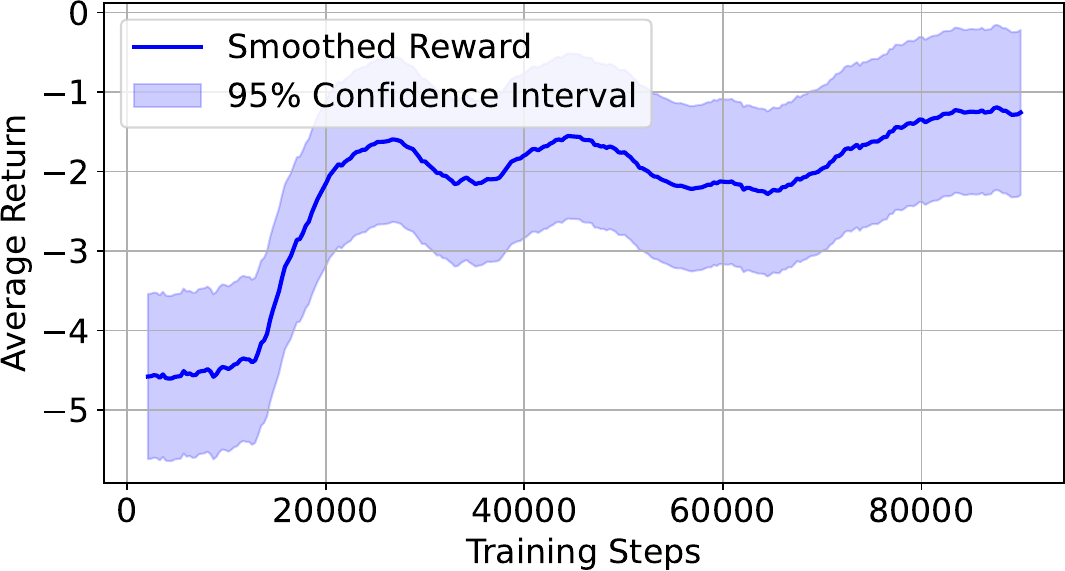}
    \caption{The average return training curves of SAC  by running $3$ times with different seeds. The lines and shaded areas represent
the average return and the 95\% confidence interval, respectively. }
    \label{fig_training}
\end{figure}
\begin{figure*}[!htb]
    \centering
     \hspace*{0.4cm}
    \begin{minipage}{0.3\textwidth}
        \centering
        \includegraphics[width=\linewidth]{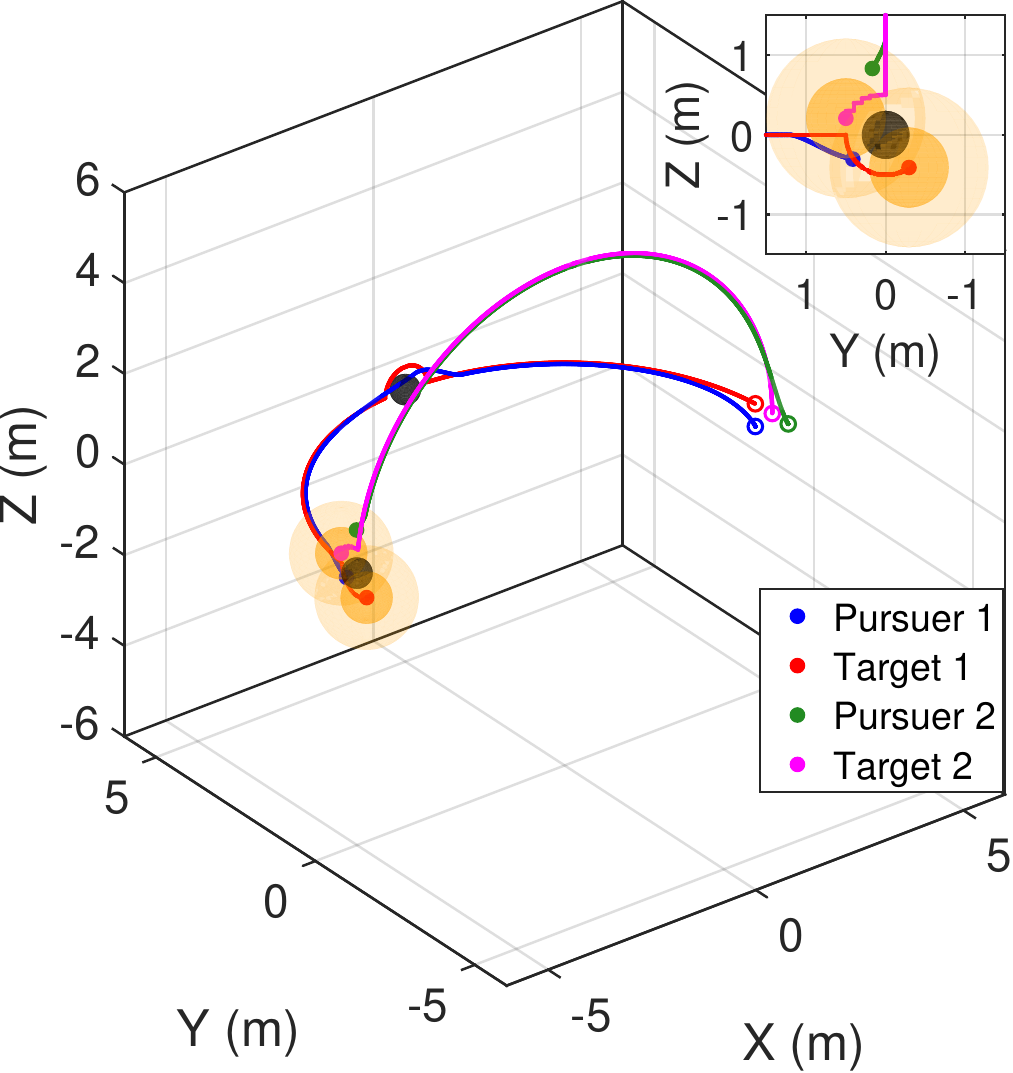} 
        \subcaption{TRUST-UP at $t=73$s.
        }\label{csrl-a-yuan} 
    \end{minipage}
    \hfill
    \begin{minipage}{0.3\textwidth}
        \centering
        \includegraphics[width=\linewidth]{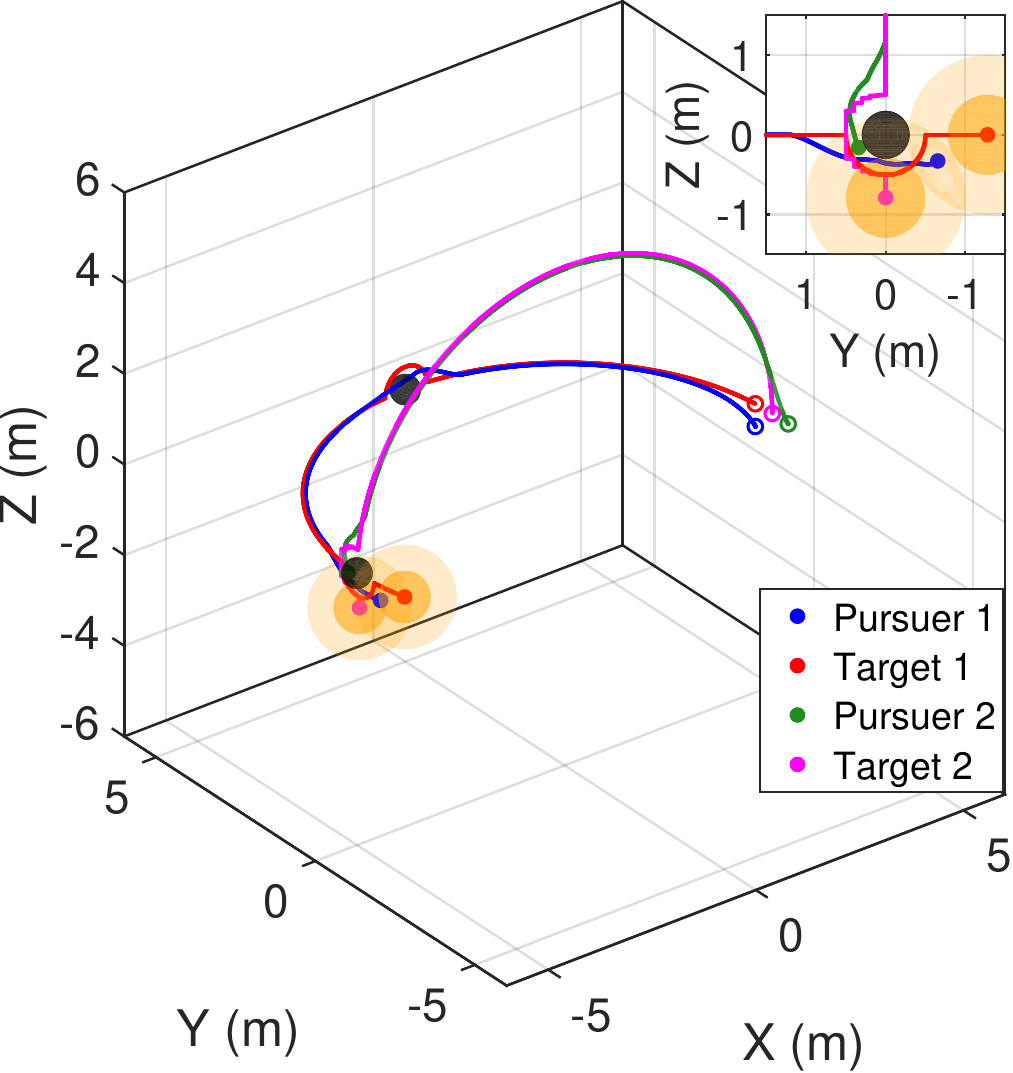} 
        \subcaption{TRUST-UP at $t=77$s.}\label{csrl-b-yuan}
    \end{minipage}
    \hfill
    \begin{minipage}{0.3\textwidth}
        \centering
        \includegraphics[width=\linewidth]{2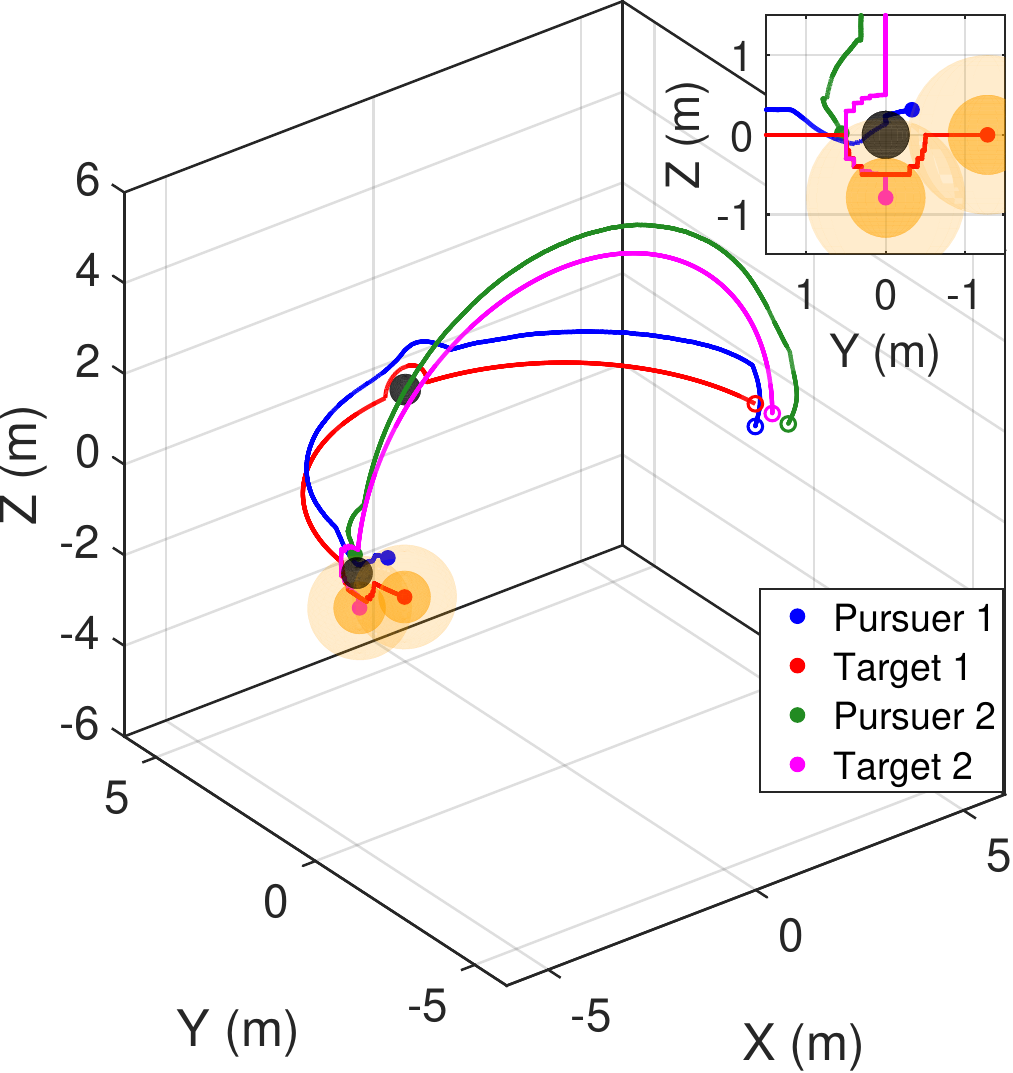} 
        \subcaption{TRUST-UP at $t=600$s.}\label{csrl-c-yuan}
    \end{minipage}
    \caption{Snapshots of the TRUST-UP algorithm at  $t=73$s, $77$s, and $600$s during the first experiment where the targets perform circular maneuvers. The red and magenta trajectories represent two pursuer UAVs, while the blue and green trajectories represent two target UAVs. The dark yellow spheres indicate the target safety radius $r_i=0.5$, and the light yellow spheres denote the sensing range between the pursuer and its target $R_i=1.0$. Black spheres represent static obstacles with a radius of $0.3$. Hollow circles in corresponding colors represent the initial positions of each UAV. }
    \label{fig:zhengti1-yuan}
\end{figure*}
\begin{figure*}[!htb]
    \centering
     \hspace*{0.4cm}
    \begin{minipage}{0.3\textwidth}
        \centering
        \includegraphics[width=\linewidth]{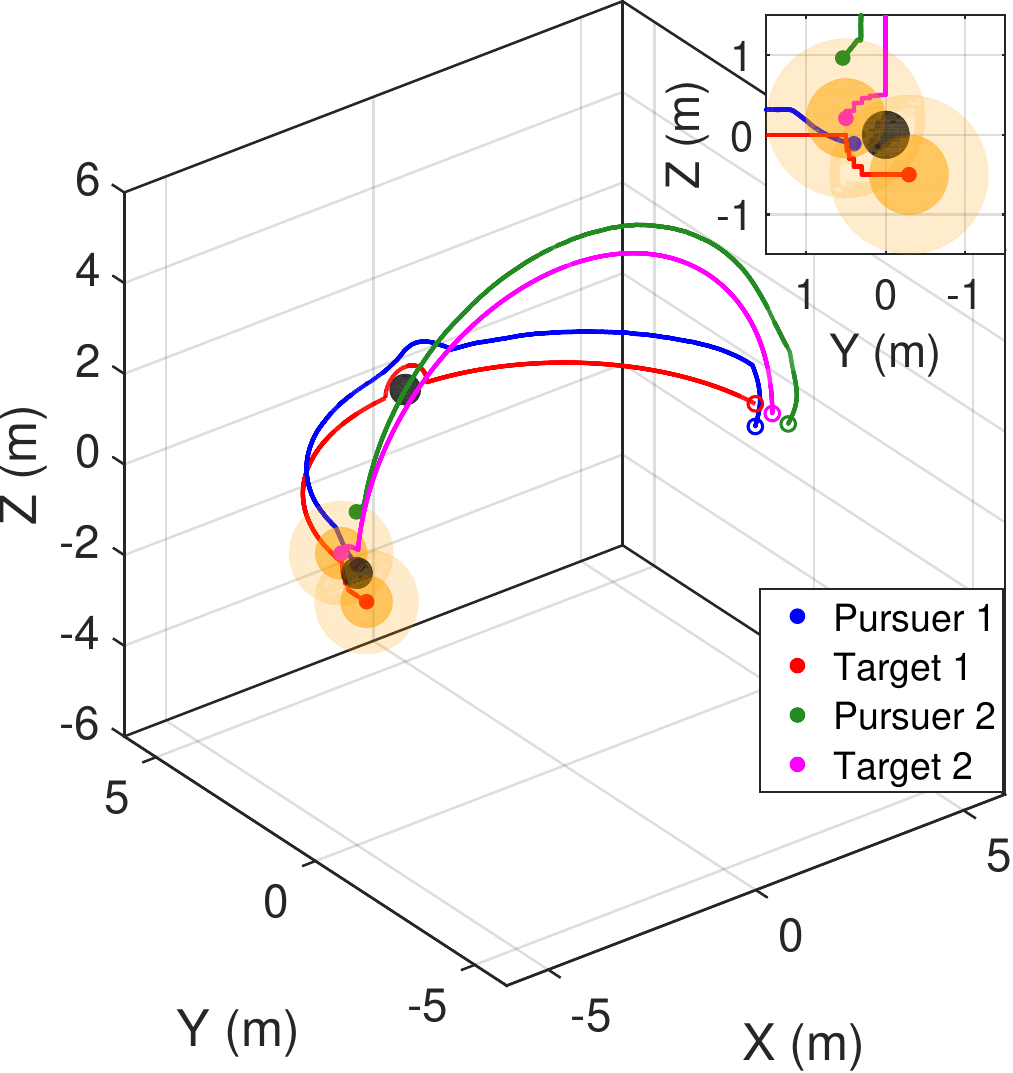} 
        \subcaption{SAC at $t=73$s}\label{onlyrl-a-yuan} 
    \end{minipage}
    \hfill
    \begin{minipage}{0.3\textwidth}
        \centering
        \includegraphics[width=\linewidth]{Fig2_revised.pdf} 
        \subcaption{SAC at $t=77$s}\label{onlyrl-b-yuan}
    \end{minipage}
    \hfill
    \begin{minipage}{0.3\textwidth}
        \centering
        \includegraphics[width=\linewidth]{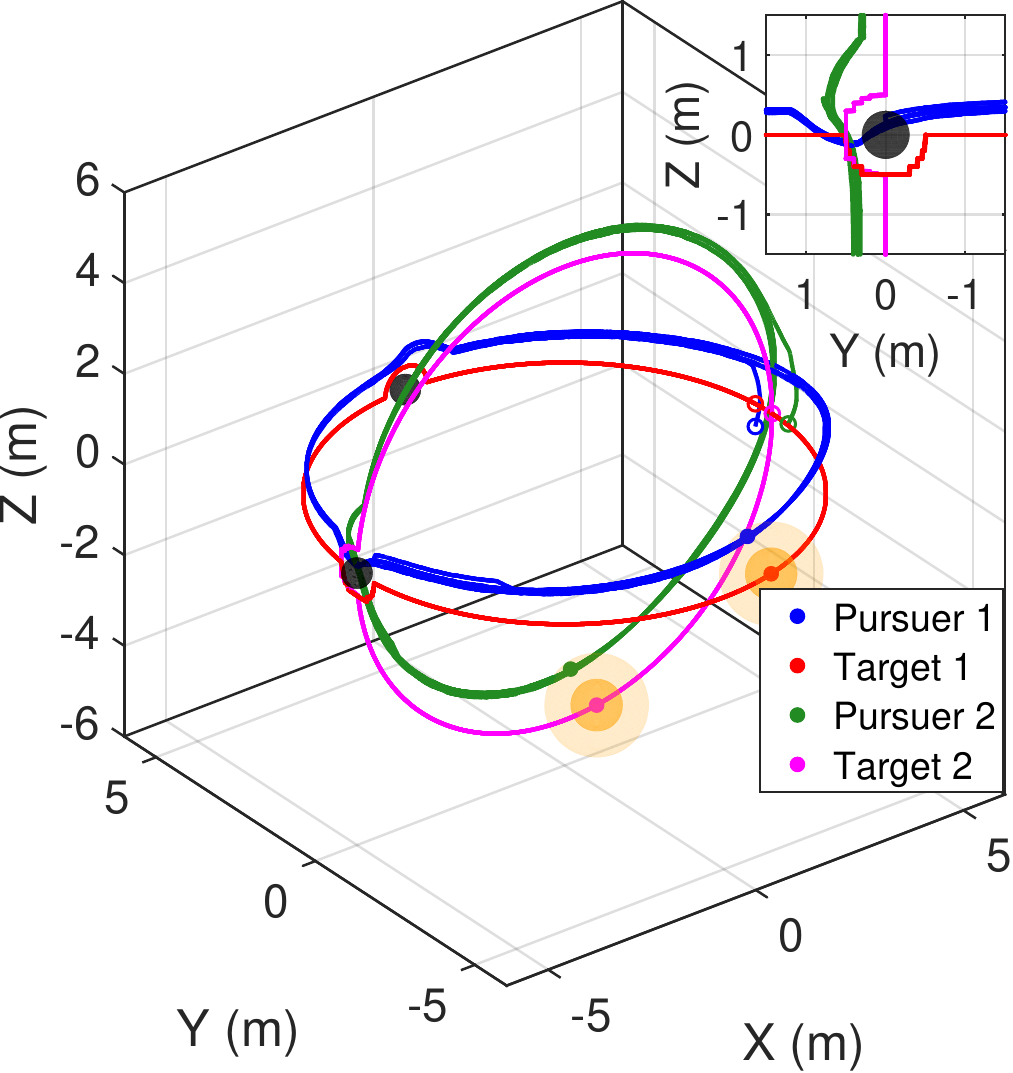} 
        \subcaption{SAC at $t=600$s}\label{onlyrl-c-yuan}
    \end{minipage}
    \caption{Snapshots of the only use SAC for pursuit UAVs at $t=73$s, $77$s, and $600$s where the targets perform circular maneuvers.}
    \label{fig:zhengti-onlyrl1-yuan}
\end{figure*}
\begin{figure*}[!htb]
    \centering
    \begin{minipage}{0.45\textwidth}
        \centering
        \includegraphics[width=\linewidth]{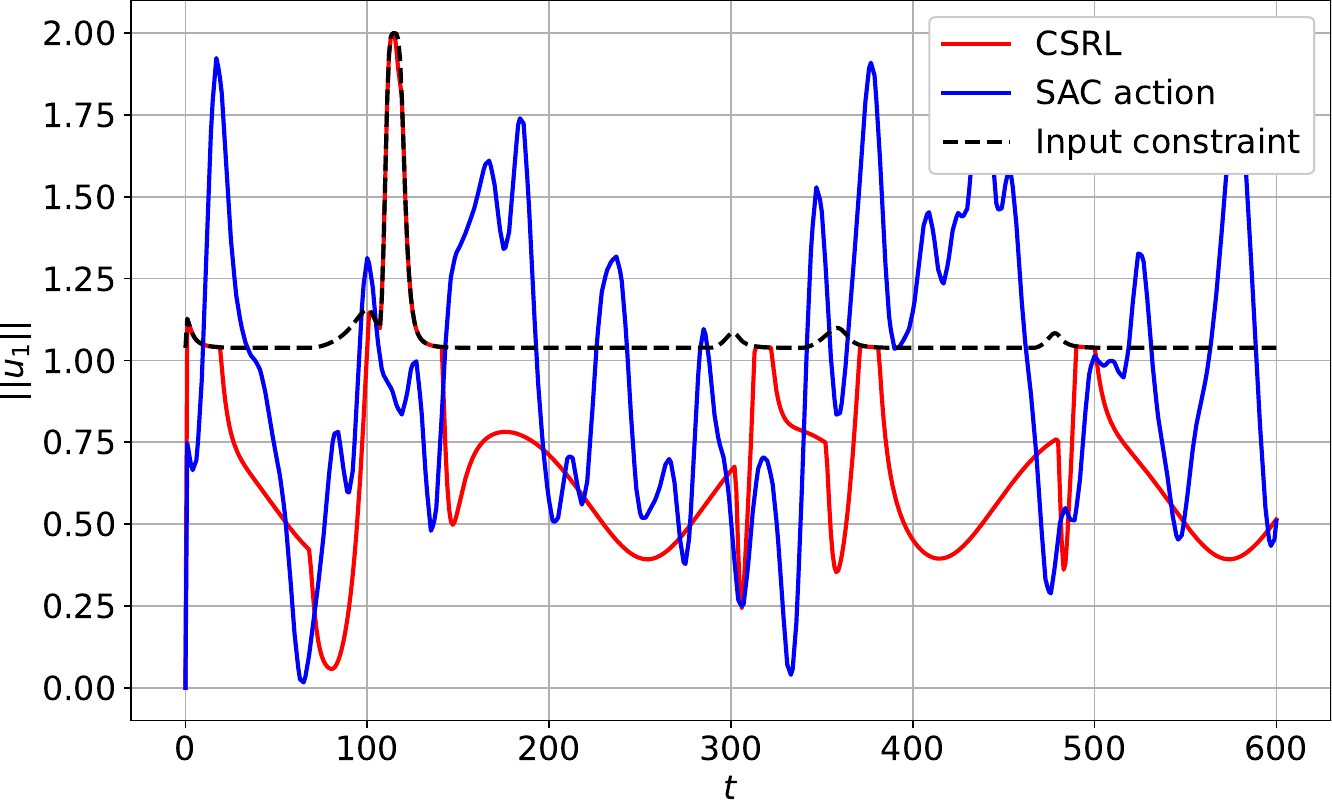} 
        \subcaption{The control input $u_1$ of TRUST-UP and RL only with time-varying input constraint $\kappa$.}\label{plot-a-yuan}
    \end{minipage}
    \hspace{0.05\textwidth} 
    \begin{minipage}{0.45\textwidth}
        \centering
        \includegraphics[width=\linewidth]{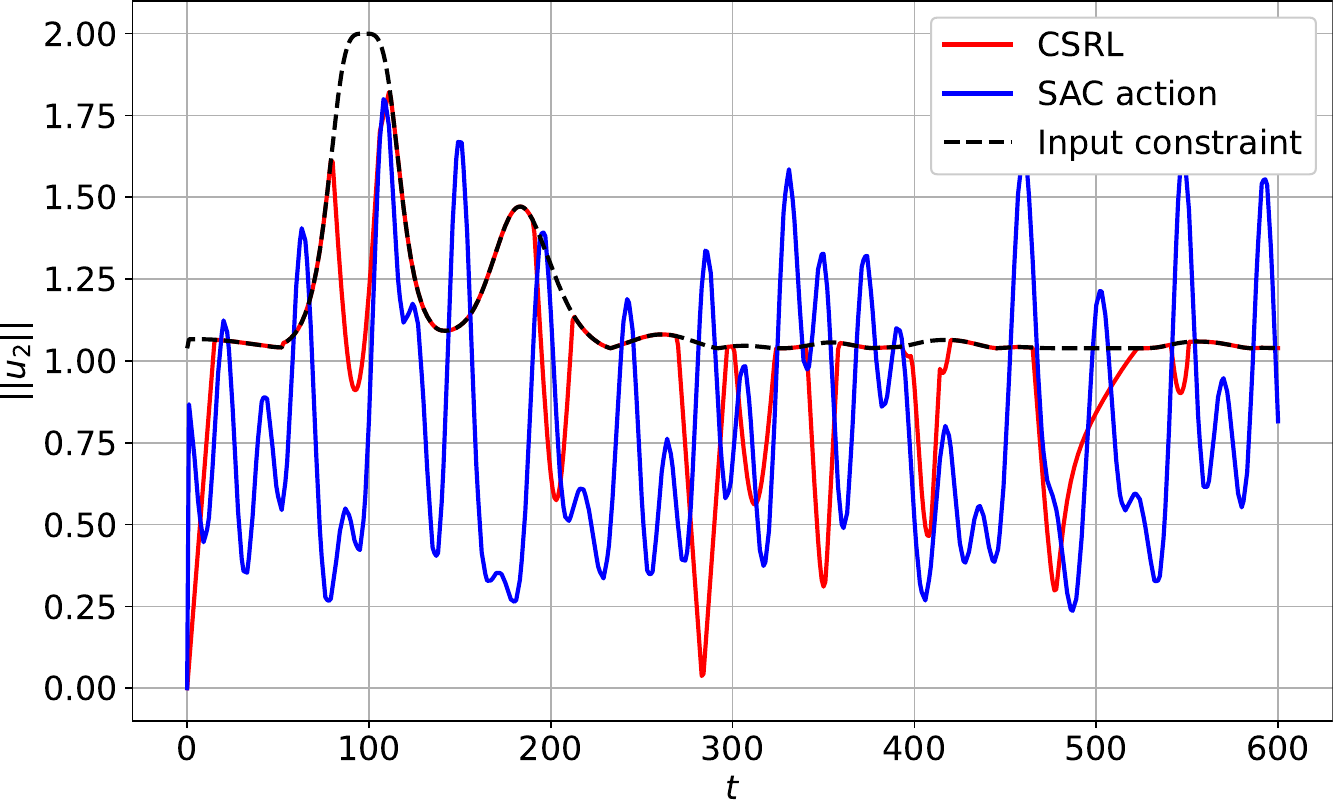} 
        \subcaption{The control input $u_2$ of TRUST-UP and SAC only with time-varying input constraint $\kappa$.}\label{plot-b-yuan}
    \end{minipage}
    
    \vspace{0.4cm} 
    \begin{minipage}{0.45\textwidth}
        \centering
        \includegraphics[width=\linewidth]{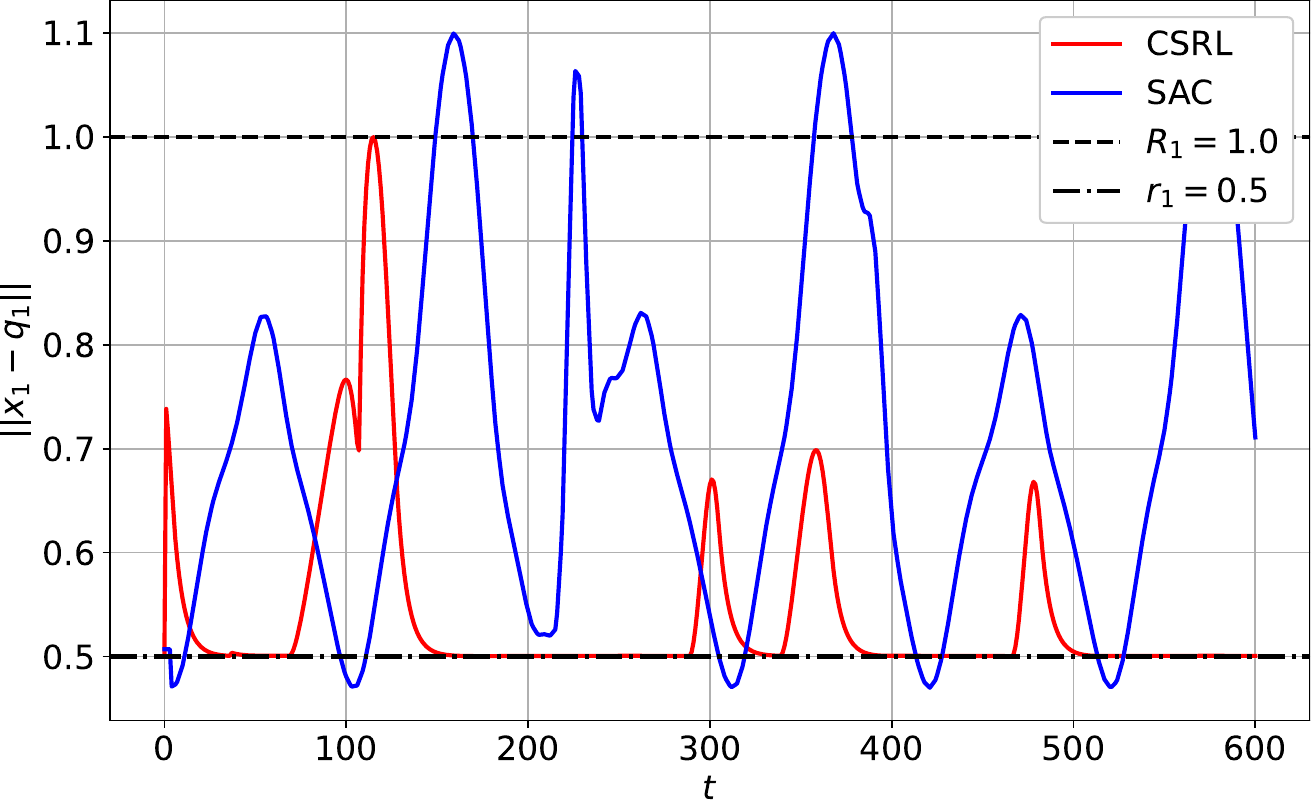} 
        \subcaption{The distance $\norm{x_1-q_1}$ of TRUST-UP and SAC only with safe collision radius $r_1 = 0.5$ and safe sensing radius $R_1=1.0$.}\label{plot-c-yuan}
    \end{minipage}
    \hspace{0.05\textwidth} 
    \begin{minipage}{0.45\textwidth}
        \centering
        \includegraphics[width=\linewidth]{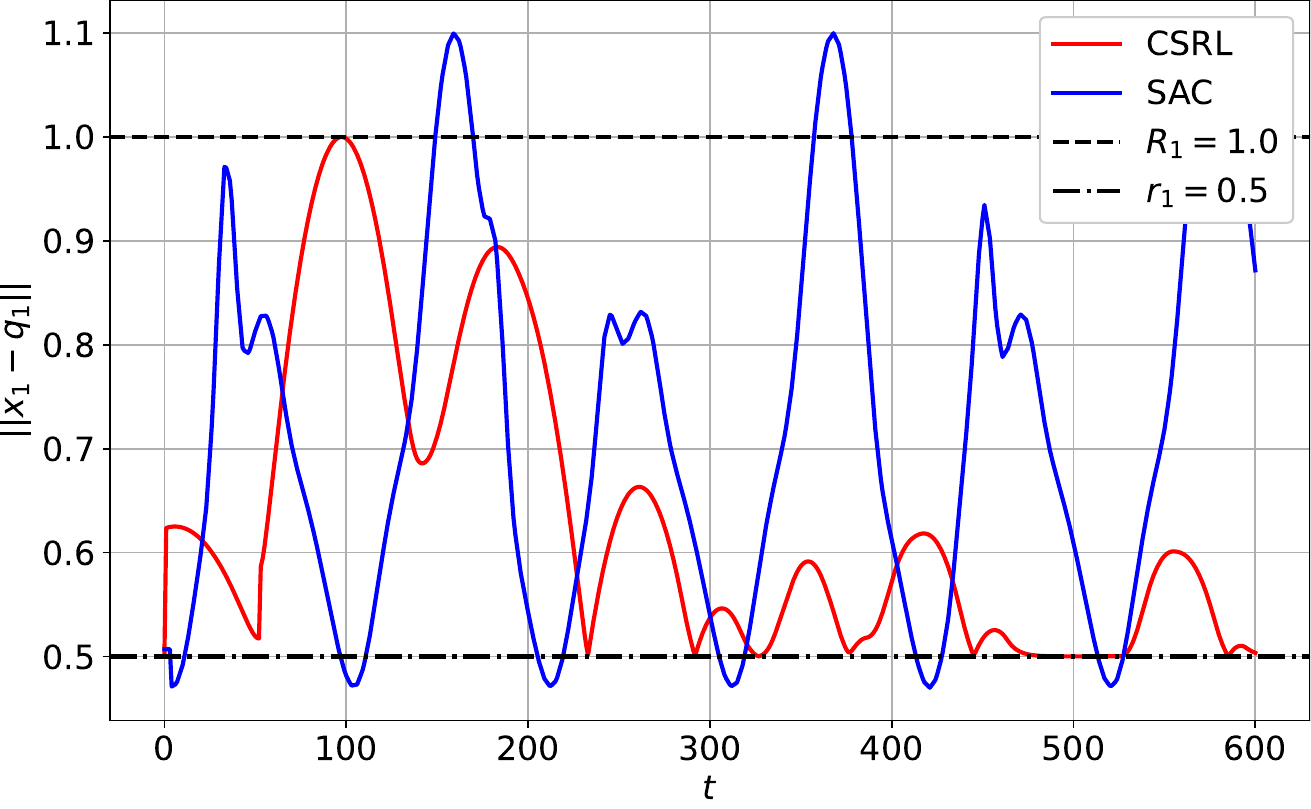} 
        \subcaption{The distance $\norm{x_2-q_2}$ of TRUST-UP and SAC only with safe collision radius $r_2 = 0.5$ and safe sensing radius $R_2=1.0$.}\label{plot-d-yuan}
    \end{minipage}
    \caption{Comparison of the TRUST-UP algorithm and SAC-only control in terms of control input $u_i$ and relative distance $\norm{\zeta_i}=\norm{x_i-q_i}$ for pursuer UAVs $1$ and $2$.}
    \label{fig-prcbf-yuan}
\end{figure*}

\begin{figure*}[!htb]
    \centering
     \hspace*{0.4cm}
    \begin{minipage}{0.3\textwidth}
        \centering
        \includegraphics[width=\linewidth]{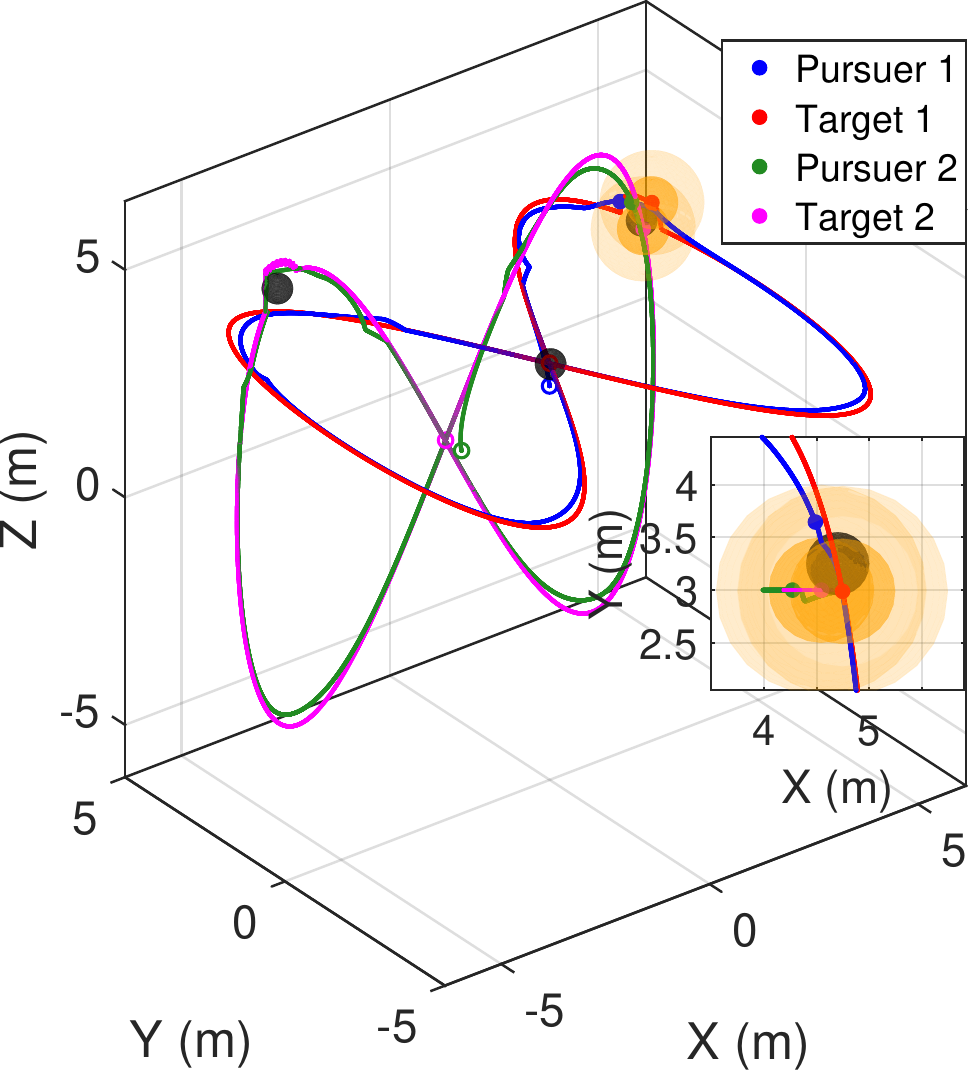} 
        \subcaption{TRUST-UP at $t=152$s.
        }\label{csrl-a} 
    \end{minipage}
    \hfill
    \begin{minipage}{0.3\textwidth}
        \centering
        \includegraphics[width=\linewidth]{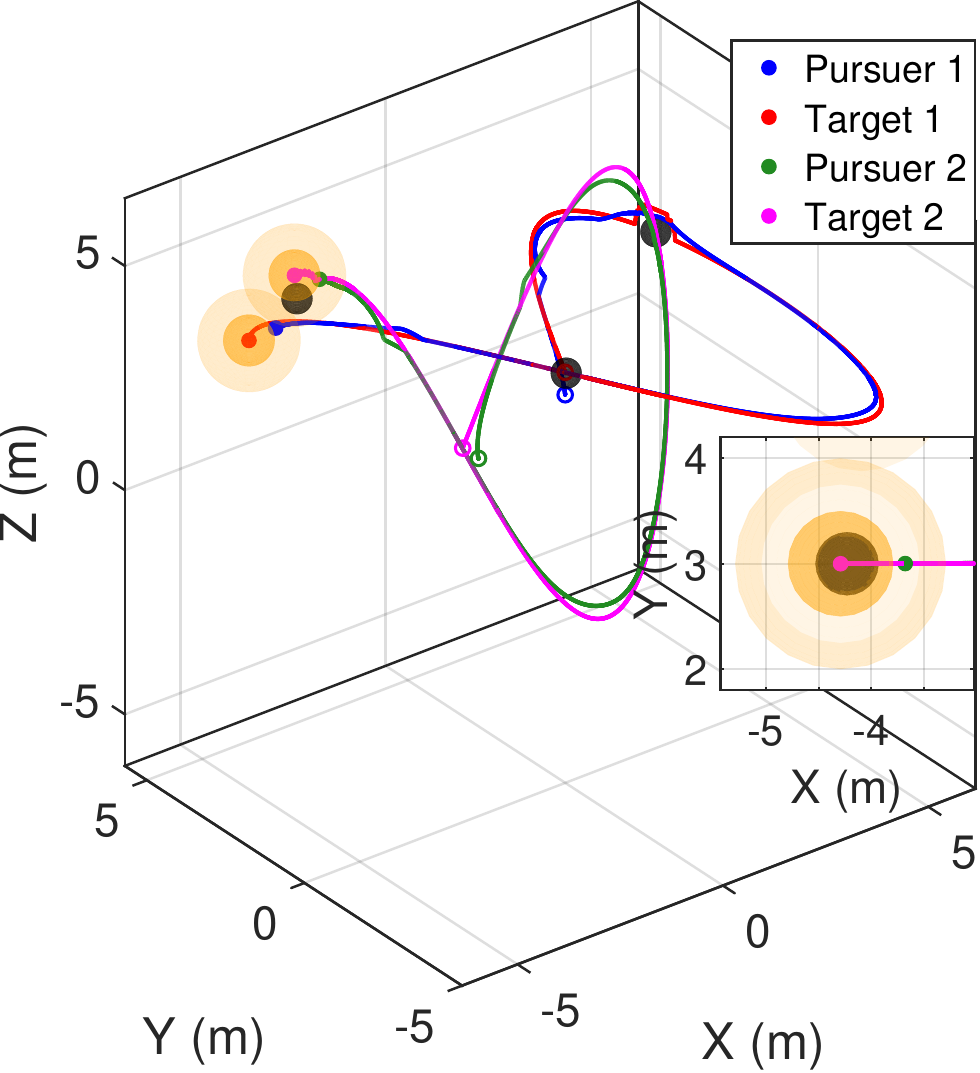} 
        \subcaption{TRUST-UP at $t=401$s.}\label{csrl-b}
    \end{minipage}
    \hfill
    \begin{minipage}{0.3\textwidth}
        \centering
        \includegraphics[width=\linewidth]{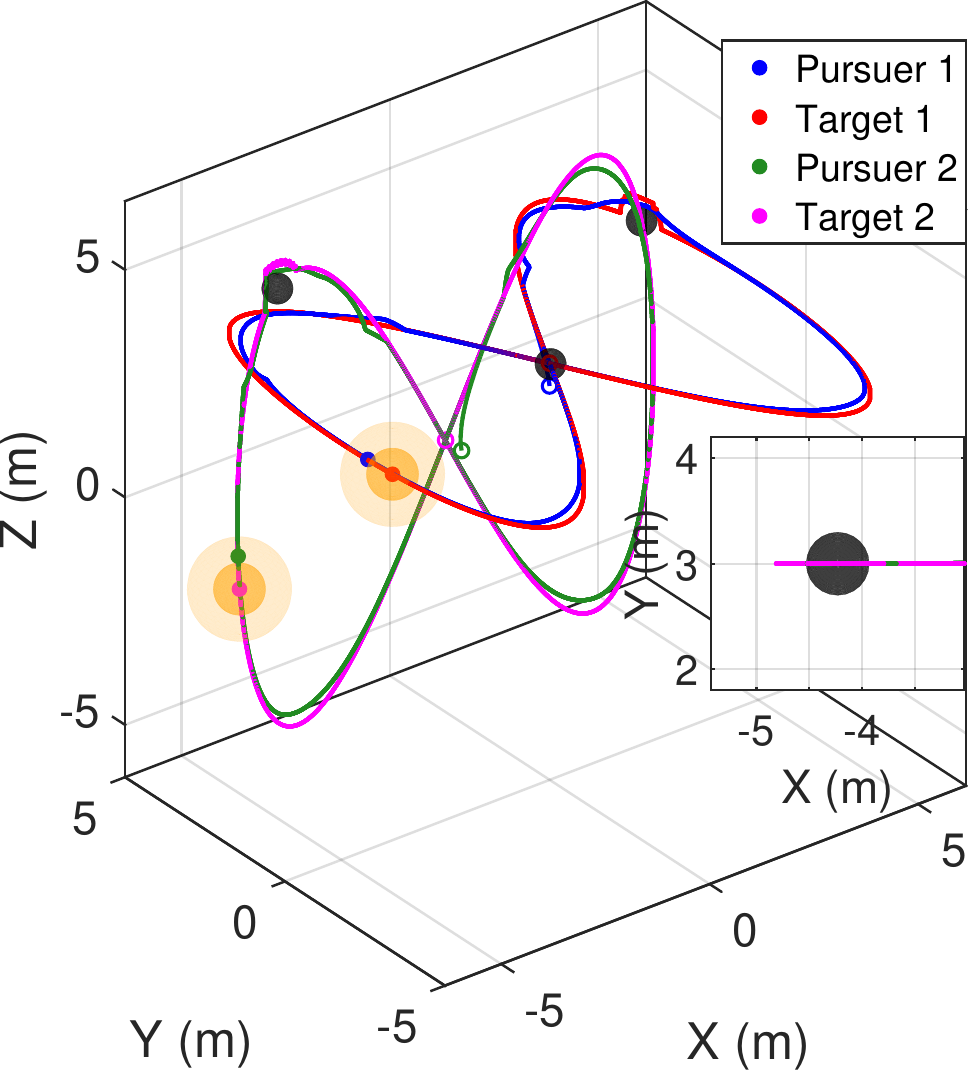} 
        \subcaption{TRUST-UP at $t=600$s.}\label{csrl-c}
    \end{minipage}
    \caption{Snapshots of the TRUST-UP algorithm at $t=152$s, $401$s, and $600$s where the targets perform ``figure-8" maneuvers.}
    \label{fig:zhengti1}
\end{figure*}
\begin{figure*}[!htb]
    \centering
     \hspace*{0.4cm}
    \begin{minipage}{0.3\textwidth}
        \centering
        \includegraphics[width=\linewidth]{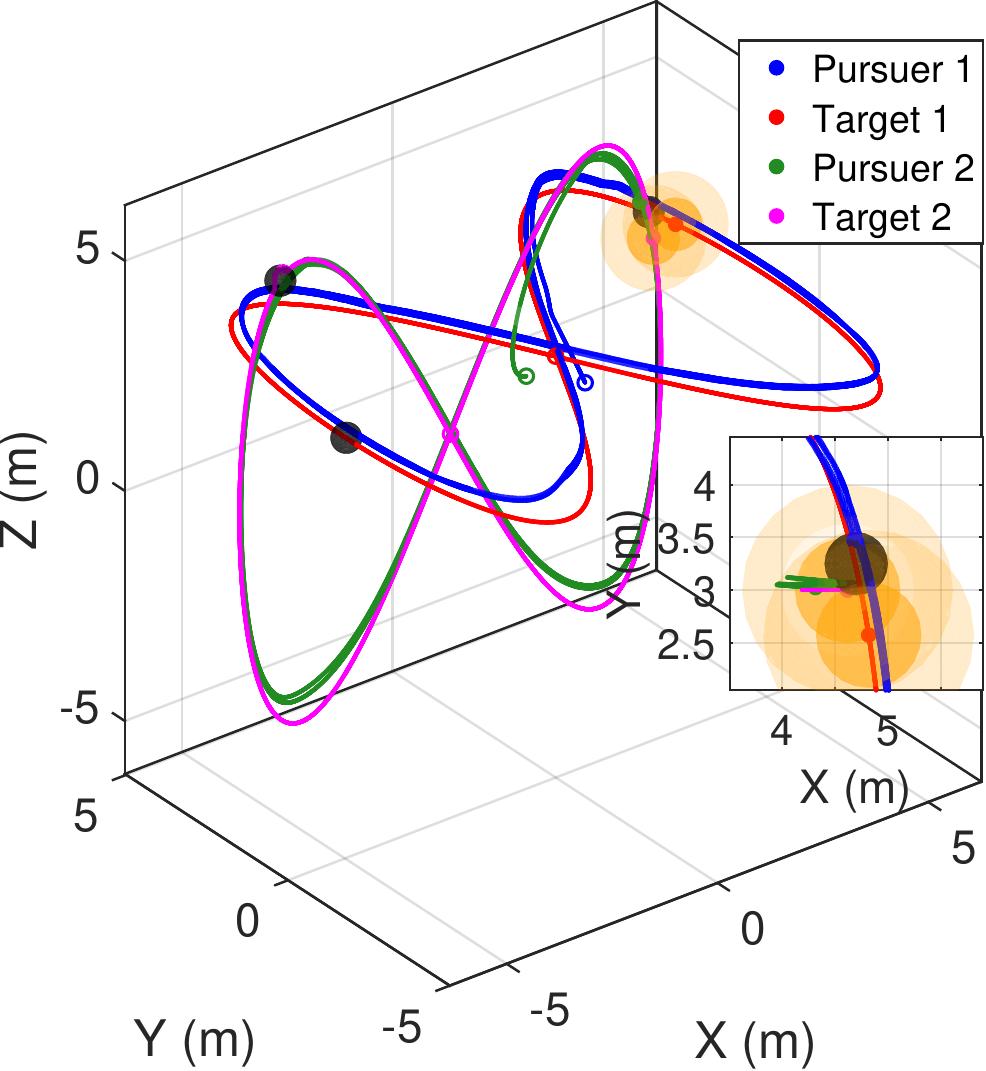} 
        \subcaption{SAC at $t=152$s}\label{onlyrl-a} 
    \end{minipage}
    \hfill
    \begin{minipage}{0.3\textwidth}
        \centering
        \includegraphics[width=\linewidth]{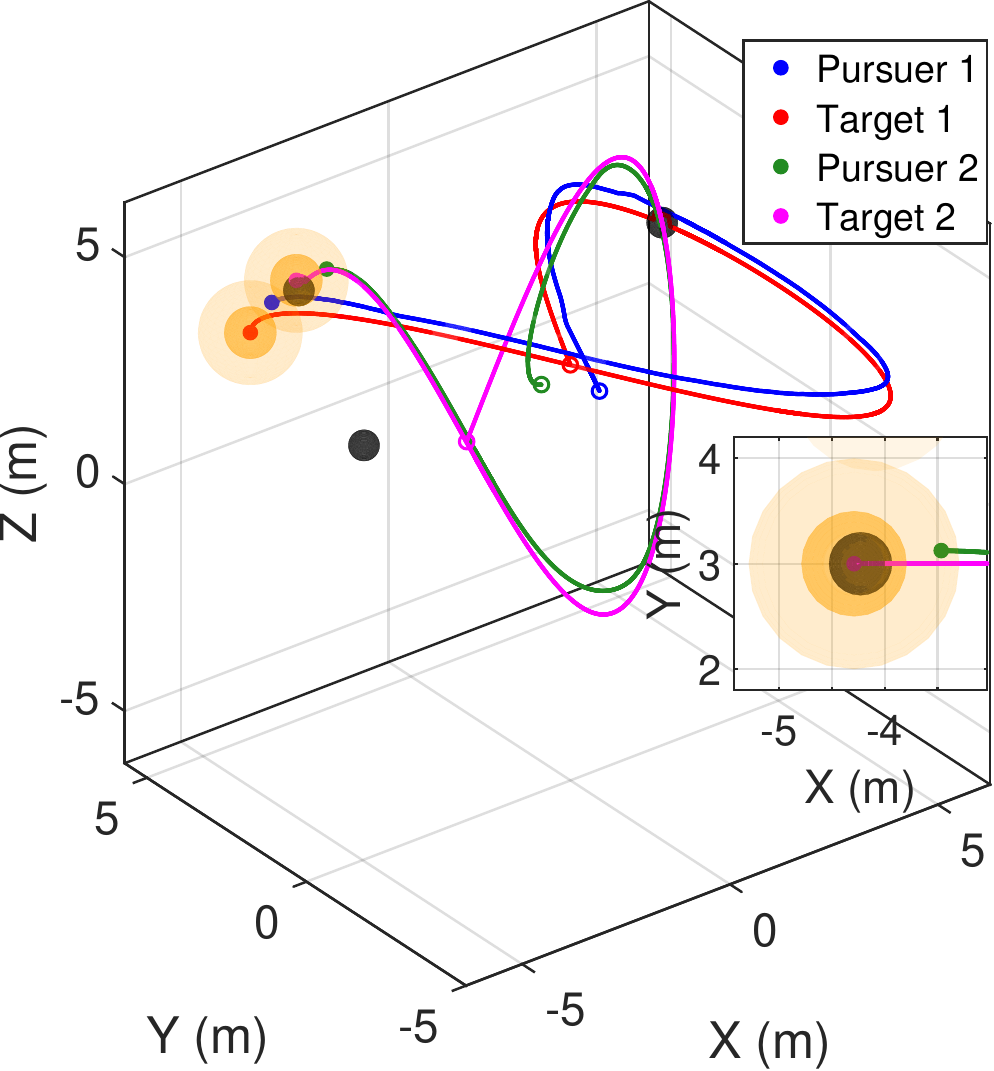} 
        \subcaption{SAC at $t=401$s}\label{onlyrl-b}
    \end{minipage}
    \hfill
    \begin{minipage}{0.3\textwidth}
        \centering
        \includegraphics[width=\linewidth]{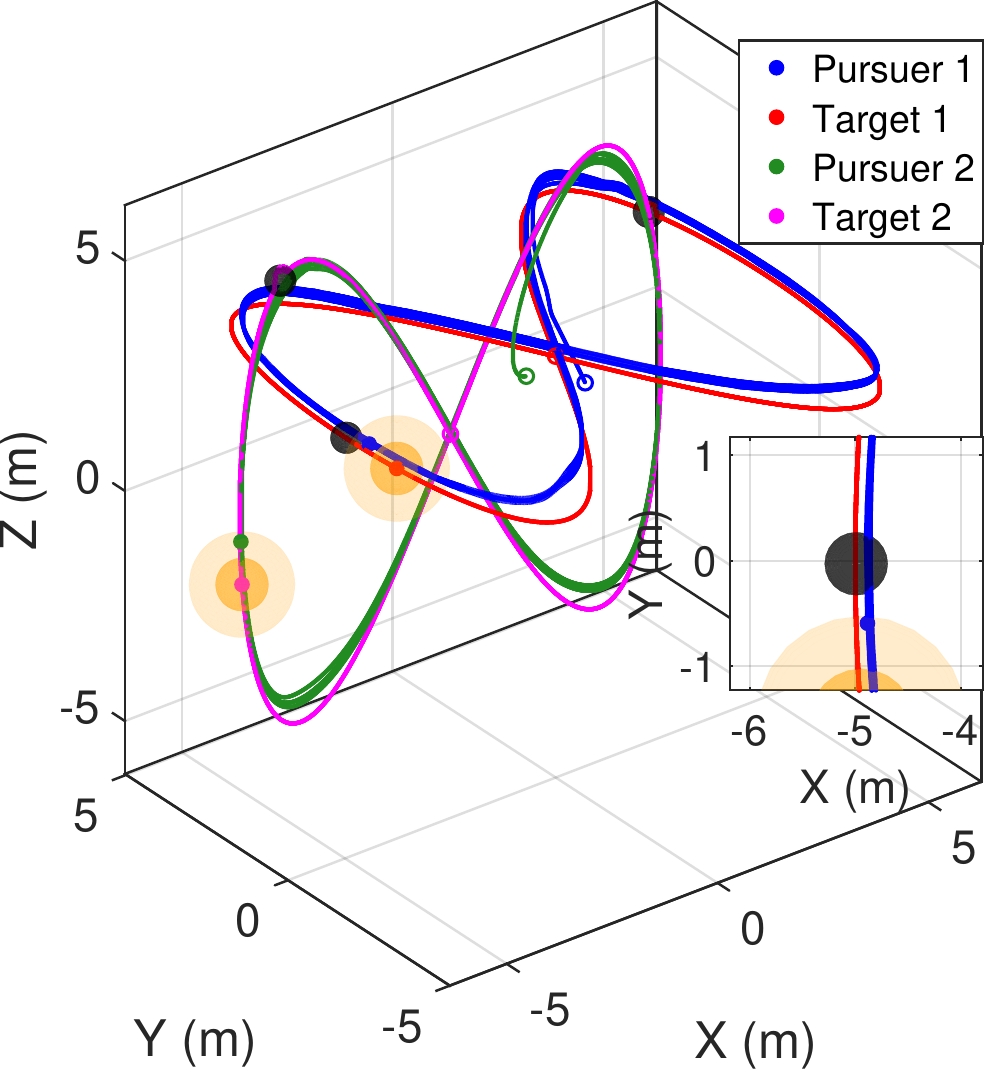} 
        \subcaption{SAC at $t=600$s}\label{onlyrl-c}
    \end{minipage}
    \caption{Snapshots of the only use SAC as the RL controller for pursuit UAVs at $t=152$s, $401$s, and $600$s where the targets perform ``figure-8" maneuvers.}
    \label{fig:zhengti-onlyrl1}
\end{figure*}
\begin{figure*}[!htb]
    \centering
    \begin{minipage}{0.45\textwidth}
        \centering
        \includegraphics[width=\linewidth]{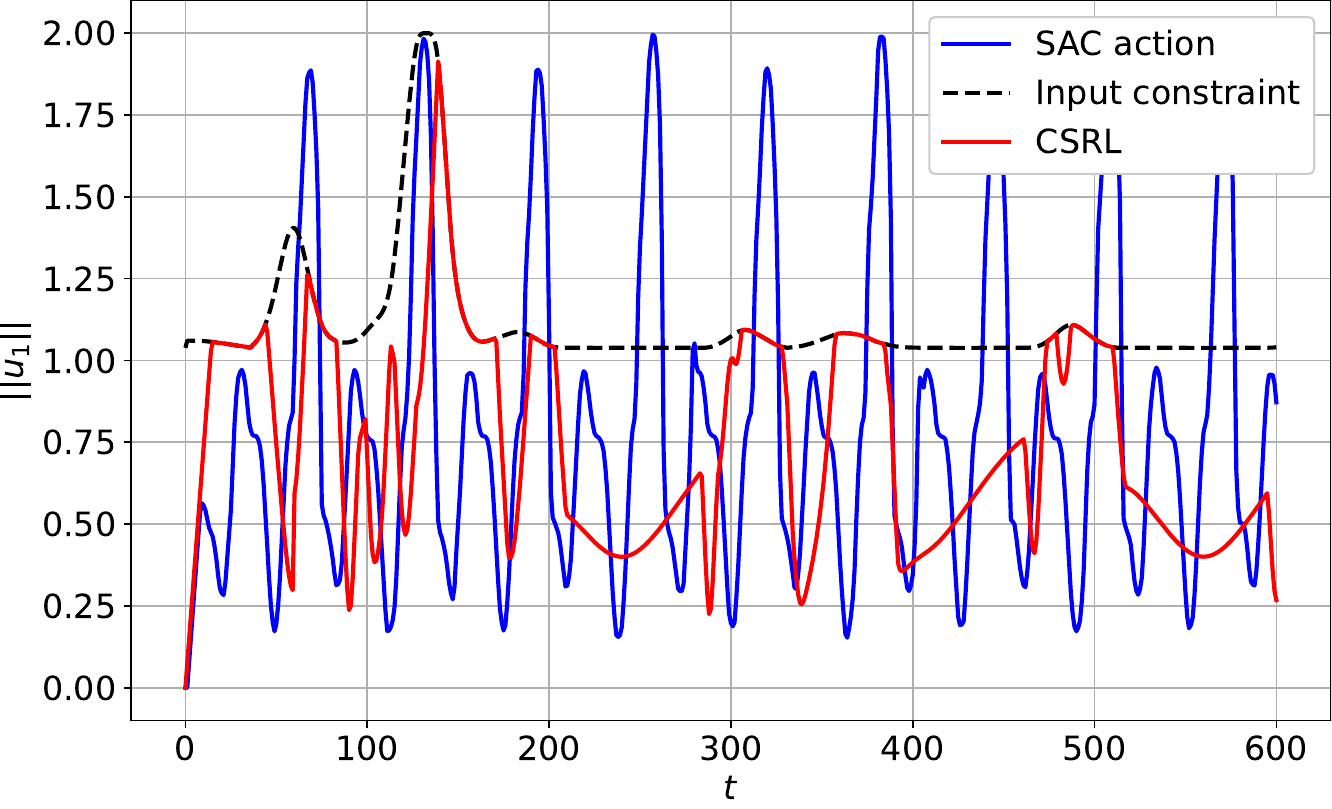} 
        \subcaption{The control input $u_1$ of TRUST-UP and SAC only with time-varying input constraint $\kappa$.}\label{plot-a}
    \end{minipage}
    \hspace{0.05\textwidth} 
    \begin{minipage}{0.45\textwidth}
        \centering
        \includegraphics[width=\linewidth]{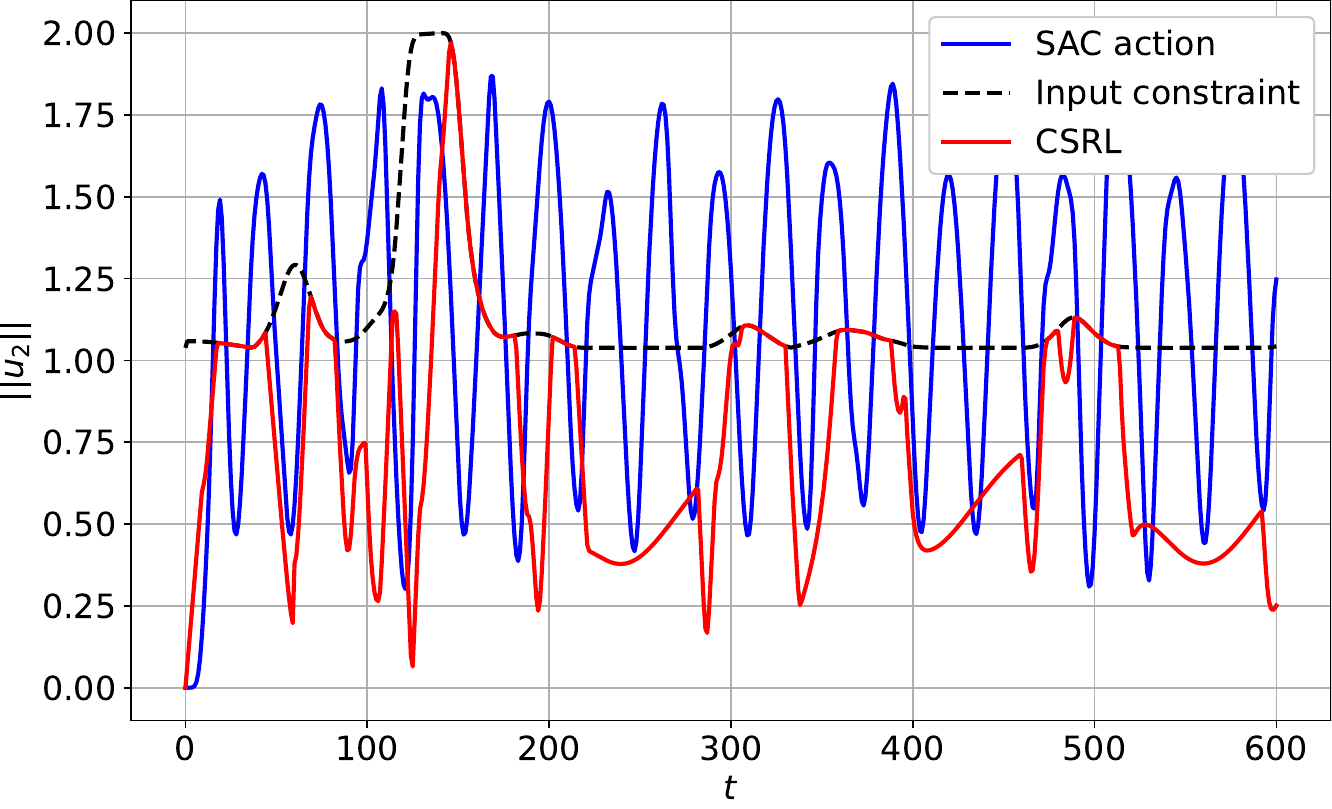} 
        \subcaption{The control input $u_2$ of TRUST-UP and SAC only with time-varying input constraint $\kappa$.}\label{plot-b}
    \end{minipage}
    
    \vspace{0.4cm} 
    \begin{minipage}{0.45\textwidth}
        \centering
        \includegraphics[width=\linewidth]{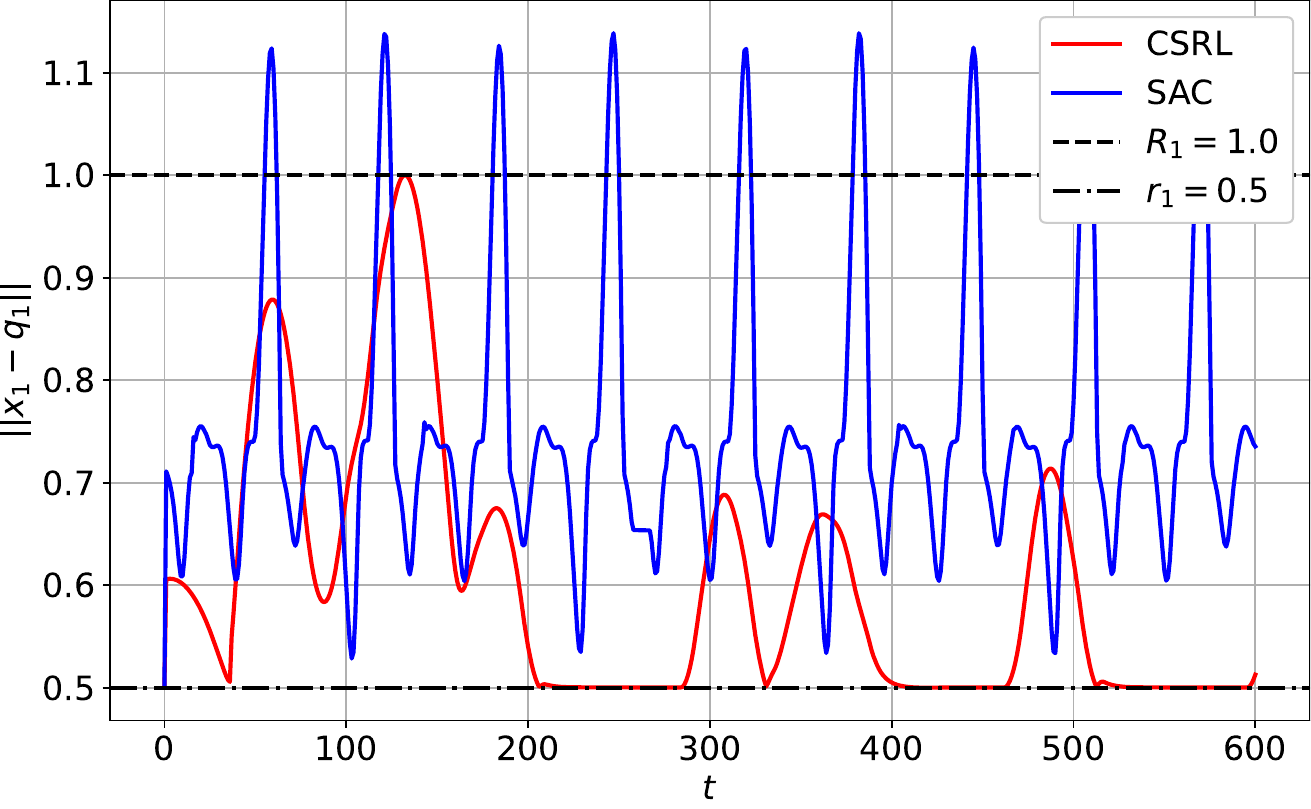} 
        \subcaption{The distance $\norm{x_1-q_1}$ of TRUST-UP and SAC only with safe collision radius $r_1 = 0.5$ and safe sensing radius $R_1=1.0$.}\label{plot-c}
    \end{minipage}
    \hspace{0.05\textwidth} 
    \begin{minipage}{0.45\textwidth}
        \centering
        \includegraphics[width=\linewidth]{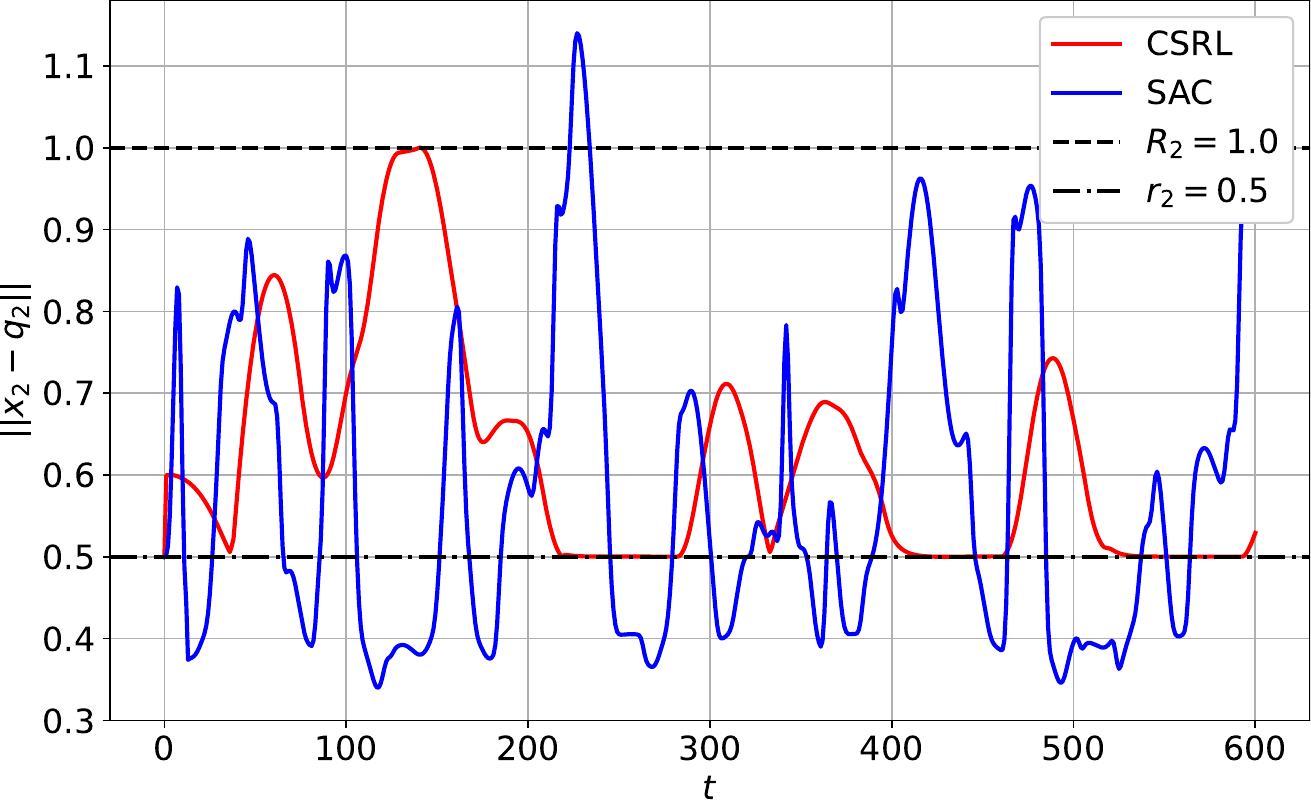} 
        \subcaption{The distance $\norm{x_2-q_2}$ of TRUST-UP and SAC only with safe collision radius $r_2 = 0.5$ and safe sensing radius $R_2=1.0$.}\label{plot-d}
    \end{minipage}
    \caption{Comparison of the TRUST-UP algorithm and SAC-only control in terms of control input $u_i$ and relative distance $\norm{\zeta_i}=\norm{x_i-q_i}$ for pursuer UAVs $1$ and $2$.}
    \label{fig-prcbf}
\end{figure*}
In this section, we design a set of numerical simulations involving two target-pursuit multicopter experiments to verify the effectiveness and safety of the proposed method. Specifically, two follower UAVs operate in the same environment, each tasked with tracking a corresponding target UAV, while navigating around obstacles and subject to unknown disturbances. To achieve this, we adopt the Soft Actor-Critic (SAC)~\cite{haarnoja1861soft}, an off-policy, model-free reinforcement learning algorithm, as the nominal control input for the followers.

In the model-free RL framework, the control action $\pi_i$ is generated by the SAC policy for the $i$-th pursuer, which is trained to enable a follower UAV to maintain a specified tracking range from its target in an obstacle-free environment without disturbances. The objective is to develop an RL control strategy for tracking that can later be integrated with safety constraints. To train the SAC policy for the pursuer’s control actions, the reward function $r(x_i, \pi_i)$ is designed to encourage the $i$-th pursuer's position $x_i$ remaining within a desired distance range $[r_1, R_1]$ to the target position $p_0$. The reward function is formulated as:
\begin{equation}\label{reward f}
r(x_i, \pi_i) = 
\begin{cases} 
0.1 & \text{if } r_i \leq \norm{\zeta_i} \leq R_i, \\
-0.1 |\norm{\zeta_i} - r_i| & \text{if } \norm{\zeta_i} < r_i, \\
-0.1 |\norm{\zeta_i}- R_i| & \text{if } \norm{\zeta_i} > R_i.
\end{cases}
\end{equation}
The reward function~\eqref{reward f} is designed to capture the relationship between tracking error and reward, assigning larger rewards for small errors and diminishing rewards as the error increases. This structure prevents the reward from approaching zero for large errors, aiding numerical stability and discouraging significant deviations from the target. {Note that the policies in~\eqref{reward f} may suffer from suboptimal convergence due to non-smooth reward gradients. However, the proposed safety filter decouples system safety from RL convergence. Even if the SAC agent outputs suboptimal commands, the deterministic safety filter actively intervenes to guarantee the safety of the human trust radii.}
The SAC policy is trained over $1500$ episodes and the time interval between each step is 0.1$s$ for each dimension of pursuer's acceleration. 
Figure~\ref{fig_training} illustrates the average return training curves for the SAC algorithms. The results indicate that training a SAC-based target pursuit controller fails due to the challenges of balancing multiple safety constraints during target maneuvers, which significantly increases the training complexity and can lead to failure. To ensure safety, the CBF-QP-based algorithm~\eqref{QP}, as outlined in Algorithm~\ref{alg:safety_rl_execution}, is employed for two pursuit UAVs after training, enabling safe tracking of their respective targets while accounting for disturbances, obstacles, sensing range and input saturation. {The continuous-time safety filter~\eqref{QP} operates discretely in the simulation. At each $\Delta t = 0.1$ control step, numerical integration maps the auxiliary input to the physical control signal.}

For simplicity, we set the uncertain viscous friction functions $Y_i \theta = \sin(x_i)$, $Z_i \xi = \cos(x_i)$, with uncertain coefficients $\theta = \xi = 1$ for all pursuit UAVs. Two sphere obstacles are placed and their
specific positions of center $p_1 = [4.70, 3.25, 3.00]$; $p_2 = [-4.20, 3.00, 4.75]$. The safe radius is set as $r_i=0.5$ and $R_i=1$ for all $i\in\mathcal{I}_x$. The total time of this simulation is $600$ steps and the time step is set as $0.1$.
The movement law of the first target $p_{0,1}$, and second target $p_{0,2}$ is defined as
\begin{equation}\label{target dynamics}
\begin{aligned}
    \ddot p_{0,1} &= \ddot p_{r1} + (p_{r1}-p_{0,1}) +  (p_{r1} - \dot p_{0,1}) + U_1(x), \\
    \ddot p_{0,2} &= \ddot p_{r2} + (p_{r2}-p_{0,2}) +  (p_{r2} - \dot p_{0,2})+ U_2(x),
\end{aligned}
\end{equation}
where $U_j(x)= \sum_{i=1}^2 \left( \frac{1}{\norm{p_0-p_i}} - 0.1 \right) \frac{p_{0,j}-p_i}{\norm{p_0-p_i}^3}$, $j\in\{1,2\}$, is a potential field term for target UAV collision-avoidance to obstacle $p_1$ and $p_2$, and
$p_{r1}$, $p_{r2}$ in~\eqref{target dynamics} are the reference signal for target UAVs. In our first experiment, the targets follow a large circular maneuver, with their positions updated as:
$p_{r1}=[5\sin(0.1t), 5\cos(0.1t), 0]$ and $p_{r2}=[5\sin(0.1t), 0, 5\cos(0.1t)]$
In our second experiment, the targets follow a ``figure-8" reference signal~\cite{o2022neural}, given by: given by $p_{r1}=[5\sin(0.1t), 5\sin(0.2t), 3]$ and $p_{r2}=[5\sin(0.1t), 3, 5\sin(0.2t)]$. 

In the first experiment, we evaluated the proposed TRUST-UP algorithm~\ref{alg:safety_rl_execution} with two targets following large circular maneuvers as shown in Figure~\ref{fig:zhengti1-yuan}. The targets incorporated a potential field term in~\eqref{target dynamics} for collision avoidance with obstacles. From Figure~\ref{csrl-a}, both target-pursuer pairs converge near $[-5;0;0]$ , where a radius $0.3$ obstacle is present. The targets perform large-angle maneuvers guided by the potential field term in~\eqref{target dynamics}. Close-up views indicate that all pursuers maintain safe distances from obstacles and other UAVs, while retaining sensing of their corresponding targets at $t=73$s. Then from Figure~\ref{csrl-b-yuan}, the pursuers successfully navigate around the obstacle while maintaining sensing safety with their respective targets. In Figure~\ref{csrl-c-yuan} the trajectories of both pursuers remain relatively stable under external disturbances, ensuring reliable tracking of their targets. Figure~\ref{fig:zhengti-onlyrl1-yuan} presents a comparison with the SAC-only method. In Figure~\ref{onlyrl-a-yuan}, the blue pursuer collides with the obstacle, highlighting the inability of SAC to guarantee collision safety. In Figure~\ref{onlyrl-b-yuan}, both pursuers fail to maintain sensing safety, as the distances between pursuers and their respective targets exceed the sensing range $R_i=1$ (indicated by the light yellow region). From Figure~\ref{onlyrl-c-yuan} we observe that under external disturbances, the SAC algorithm demonstrates unstable tracking behavior, failing to reliably pursue the targets. These results demonstrate the superior performance of the TRUST-UP algorithm in ensuring collision safety, sensing safety, and tracking stability, even in challenging scenarios with obstacles and disturbances, compared to the SAC-only method. 
Figure~\ref{fig-prcbf-yuan} presents the control inputs and states corresponding to the scenario in Figure~\ref{fig:zhengti1-yuan} and Figure~\ref{fig:zhengti-onlyrl1-yuan}. Figure~\ref{plot-a-yuan} and Figure~\ref{plot-b-yuan} shows that, compared with SAC, the proposed TRUST-UP algorithm keeps the control inputs of both pursuers within the prescribed bounds in~\eqref{C3} throughout the pursuit process, {and no evident high-frequency oscillation is observed when the inputs are limited by the dashed constraint boundary.}
Figure~\ref{plot-c-yuan} and Figure~\ref{plot-d-yuan} illustrate that, under the TRUST-UP algorithm, both pursuers maintain sensing safety~\eqref{C2} and collision safety~\eqref{C1} with their respective targets, which are not guaranteed under the SAC algorithm.

To evaluate the algorithm's performance in more dynamic environments, we conducted a second set of experiments where the target UAVs followed ``figure-8" trajectory. These trajectories introduce varying curvature and dynamic changes, providing a more challenging test for the TRUST-UP algorithm. Figure~\ref{fig:zhengti1} shows our proposed algorithm~\ref{alg:safety_rl_execution} by two groups of pursuit UAVs tracking the corresponding target UAVs with ``figure-8-shape" reference signal. It can be obtained that our algorithm ensures the safe sensing range $[r_i, R_i]$ and collision-avoidance of both follower $1$ to target $1$ and follower $2$ to target $2$. 
From Figure~\ref{csrl-a}, at $t=152$s,  the two pursuer UAVs maintain a safe distance from each other while tracking their respective targets $\norm{x_1-x_2}=0.5162>0.5$.  In contrast, Figure~\ref{onlyrl-a} shows that the SAC-only method fails to maintain the safe collision radius between the pursuers ($0.2822<0.5)$. In Figure~\ref{csrl-b}, at $t=401$s, the two pursuer UAVs not only avoid obstacles but also maintain sensing of their respective targets. However, in Figure~\ref{onlyrl-b}, it is evident that both pursuer $1$ and pursuer $2$ lose sensing of their targets, highlighting the limitation of the RL-only method. Finally, comparing Figure~\ref{csrl-c} and Figure~\ref{onlyrl-c}, it is clear that under external disturbances $\theta$ and $\xi$, the proposed TRUST-UP algorithm demonstrates superior tracking stability compared to the SAC-only method, ensuring reliable pursuit control even in challenging scenarios.

Figures~\ref{plot-a} and~\ref{plot-b} demonstrate that under the TRUST-UP algorithm, both pursuer $1$ and pursuer $2$ maintain their control inputs within the input constraint $\kappa$ in the second experiment. Around $t=125$s, when the targets exhibit evasive maneuvers, the input constraint temporarily increases to accommodate the stronger control effort required to ensure sensing safety. During normal tracking phases, the input constraint effectively regulates the control input, preventing unstable nominal control signals and ensuring smooth operation. Figures~\ref{plot-c} and~\ref{plot-d} show that the TRUST-UP algorithm enables both pursuer $1$ and pursuer $2$ to maintain the required safe collision radius and sensing range with their respective target UAVs. This ensures that the pursuers achieve safe and effective target tracking, even under challenging conditions.
The results validate that our TRUST-UP algorithm not only satisfies input constraints and safety requirements $\C_{u,i}$ in~\eqref{C3}, $\C_{c,i}$ in~\eqref{C1} and $\C_{s,i}$ in~\eqref{C2} for all pursuit UAVs, but also improves the stability and robustness of target pursuit, making it suitable for scenarios with external disturbances and dynamic targets. {We further compared TRUST-UP CBF-QP with an input constrained CBF-QP technique~\cite{ames2019control} under identical simulation settings. TRUST-UP reduced the mean QP solving time from $0.695$ms to $0.597$ms ($14.1$\%), indicating improved online computational efficiency.}

\section{Conclusion}
\label{sec-conclusion}
    In this paper, we proposed the TRUST-UP algorithm to address the aerial pursuit problem for autonomous UAVs, ensuring safety regarding collision avoidance, separation standards, and thrust constraints in urban aviation environments with dynamic human activities. The algorithm integrates three CBF constraints into a safety filter that transforms unsafe RL outputs into verifiably safe flight commands via a QP. A transparent switching strategy enhances feasibility while providing interpretable safety decisions, ensuring solutions satisfy the KKT conditions for all safety constraints. Through formal verification and simulations, we demonstrated that TRUST-UP achieves the safety guarantees and operational transparency required for airworthiness certification, addressing key trustworthiness requirements in aviation safety standards. 
{While other constrained RL methods~\cite{dai2023augmented,xu2021crpo} also address safety through reward-cost trade-offs during training, the focus of this work is on an online CBF-QP safety filter, which provides a more transparent mechanism for multi-constraint handling in the considered human-aware pursuit scenario.} {In the present TRUST-UP framework, the human-aware safety requirement is modeled as a deterministic trust radius, so that the psychological safety zone can be converted into a hard constraint within the online CBF-QP safety filter. This choice is intended to prioritize transparent and real-time multi-constraint safety handling in complex pursuit scenarios. Future research will explore adaptive trust radius based on human movement and physiological markers to achieve more advanced human-aware navigation.}

\appendix
\section{Appendix}
    The proof of Lemma~\ref{lem-kcbf3} is provided as follows. 
\begin{proof}
Let $\zeta_i = x_i - q_i$. The derivative of $h_{u,i}$ is given by 
\begin{equation}\begin{aligned}\label{temp2}
&\dot h_{u,i}= \dfrac{1}{2}(\kappa\dot \kappa-u_i^\top \dot u_i)\\
&\begin{multlined}[.85\linewidth]
=  -\dfrac{2\kappa (\zeta_i^\top\zeta_i-\ell^2)}{(\zeta_i^\top\zeta_i-\ell^2)^2+\epsilon)^2}\zeta_i^\top(f_i+g_iu_i-\dot q_i) \\
-\dfrac{2\kappa (\zeta_i^\top\zeta_i-\ell^2)}{(\zeta_i^\top\zeta_i-\ell^2)^2+\epsilon)^2}\zeta_i^\top Y_i\theta-u_i^\top Z_i\xi +u_i^\top v_i
\end{multlined}
\\
&\begin{multlined}[.85\linewidth]
=
-\dfrac{2\kappa (\zeta_i^\top\zeta_i-\ell^2)}{(\zeta_i^\top\zeta_i-\ell^2)^2+\epsilon)^2}\zeta_i^\top(f_i+g_iu_i-\dot q_i) + u_i^\top v_i\\- \dfrac{2\kappa (\zeta_i^\top\zeta_i-\ell^2)}{(\zeta_i^\top\zeta_i-\ell^2)^2+\epsilon)^2}\zeta_i^\top Y_i(\varepsilon_\theta+\hat \theta)\\-u_i^\top Z_i(\varepsilon_\xi + \hat \xi).
\end{multlined}
\end{aligned}\end{equation}
Using Assumption~\ref{asm-p dot p}, there always exists positive constant $\rho_v\in\R_{\geq 0}$ such that for all $t\geq 0$, $\norm{\dot q_i(t)}\leq \rho_v$. Therefore, we can yield the following inequality from~\eqref{temp2}
\begin{equation}\begin{aligned}\label{dot h3}
&\dot h_{u,i} \\
&\begin{multlined}[.85\linewidth]
\geq-\dfrac{2\kappa (\zeta_i^\top\zeta_i-\ell^2)}{(\zeta_i^\top\zeta_i-\ell^2)^2+\epsilon)^2}\zeta_i^\top(f_i+g_iu_i-\dot q_i) + u_i^\top v_i\\
- \dfrac{2\kappa (\zeta_i^\top\zeta_i-\ell^2)}{(\zeta_i^\top\zeta_i-\ell^2)^2+\epsilon)^2}\zeta_i^\top Y_i\hat \theta -u_i^\top Z_i\hat \xi
\\
-\norm{\dfrac{2\kappa (\zeta_i^\top\zeta_i-\ell^2)}{(\zeta_i^\top\zeta_i-\ell^2)^2+\epsilon)^2}\zeta_i^\top Y_i}\norm{\varepsilon_\theta} -\norm{u_i^\top Z_i}\norm{\varepsilon_\xi}\\
- \norm{\dfrac{2\kappa (\zeta_i^\top\zeta_i-\ell^2)}{(\zeta_i^\top\zeta_i-\ell^2)^2+\epsilon)^2}\zeta_i}\rho_v.
\end{multlined}
\end{aligned}\end{equation}
Define $\hbar_{u,i} = \dot h_{u,i} + \alpha(h_{u,i})$, 
using~\eqref{error bound} and \eqref{dot h3}, one yields
\begin{equation}\label{dot h3-2}
\begin{multlined}[.85\linewidth]
\hbar_{u,i} \geq -\dfrac{2\kappa (\zeta_i^\top\zeta_i-\ell^2)}{(\zeta_i^\top\zeta_i-\ell^2)^2+\epsilon)^2}\zeta_i^\top (f_i+g_iu_i + Y_i\hat \theta) 
\\
-\norm{\dfrac{2\kappa (\zeta_i^\top\zeta_i-\ell^2)}{(\zeta_i^\top\zeta_i-\ell^2)^2+\epsilon)^2}\zeta_i}\left(\norm{Y_i}\nu+\rho_v\right) \\ -u_i^\top Z_i\hat \xi -\norm{u_i^\top Z_i}\eta + u_i^\top v_i + \alpha(h_{u,i}).
\end{multlined}
\end{equation}
\end{proof}
The proof of Lemma~\ref{lem-kcbf1} is provided as follows. 
\begin{proof}
We begin with defining a sufficiently smooth function $\hbar_{c,i,k}$:
\begin{equation}\label{h bar}
\hbar_{c,i,k}=\ddot h_{c,i,k} + \iota_i\dot h_{c,i,k} + \alpha (h_{c,i,k})
\end{equation}
as a HOCBF with relative degree $r = 2$ of the CBFs $h_{c,i,k}$ in~\eqref{h1}. 
Then, to prove Lemma~\ref{lem-kcbf1}, one needs to show that $\hbar_{c,i,k}(t)\geq 0$ for all $t>0$ and all $k\in\mathcal{I}_k$, such that $h_{c,i,k}(t)\geq 0$ for all $t\geq 0$. This property holds if $\dot{\hbar}_{c,i,k}$ can be expressed in the form of (or larger than) $-\mu \dot{\hbar}_{c,i,k}$ where $\mu>0$ with $\dot{\hbar}_{c,i,k}(0)\geq 0$.

By reduction and simplification, we yield
\begin{equation}\label{dot h1}
\begin{aligned}
\dot h_{c,i,k}
&\begin{multlined}
= 2\zeta_i^\top (f_i + g_iu_i + Y_i(\hat \theta +\varepsilon_\theta) - \dot p_k),
\end{multlined}\\
&\begin{multlined}
\geq 2\zeta_i^\top(f_i + g_iu_i) +  2\zeta_i^\top Y_i\hat \theta  - 2\norm{\zeta_i^\top Y_i}\norm{\varepsilon_\theta},
\end{multlined}
\end{aligned}
\end{equation}
\begin{equation}\begin{aligned}\label{ddot h1}
&\ddot h_{c,i,k}\\
&\begin{multlined}[.85\linewidth]
\geq 2\zeta_i^\top(\dot f_i + \dot g_i u_i) + 2\zeta_i^\top \dot Y_i\hat \theta - 2\norm{\zeta_i\dot Y_i}\norm{\varepsilon_\theta}
\\ - 2\norm{\zeta_i}\rho_a\!+\! 2\zeta_i^\top g_iZ_i\hat\xi \!-\!\norm{2\zeta_i^\top g_iZ_i}\norm{\varepsilon_\xi}  
\!-\!2\zeta_i^\top g_iv_i.
\end{multlined}
\end{aligned}\end{equation}
By using~\eqref{error bound} and applied Lemma~\ref{lem-updatelaw} to~\eqref{dot h1} and~\eqref{ddot h1}, we yield 
\begin{equation}\label{dot h1-2}
\begin{multlined}
\dot h_{c,i,k} 
\geq 2\zeta_i^\top(f_i + g_iu_i) + 2\zeta_i^\top Y_i\hat \theta - 2\norm{\zeta_i^\top Y_i}\nu,
\end{multlined}
\end{equation}
\begin{equation}\label{ddot h1-2}
\begin{multlined}[.85\linewidth]
\ddot h_{c,i,k} \geq 2\zeta_i^\top(\dot f_i + \dot g_iu_i) + 2\zeta_i^\top\dot Y_i\hat \theta -\norm{2\zeta_i^\top\dot Y_i} \nu \\- 2\norm{\zeta_i}\rho_a + 2\zeta_i^\top g_iZ_i\hat\xi -\norm{2\zeta_i^\top g_iZ_i}\eta
\\- 2\zeta_i^\top g_iv_i.
\end{multlined}
\end{equation}
Using~\eqref{dot h1-2} and~\eqref{ddot h1-2}, we yield
\begin{equation}\label{hbar>}
\begin{multlined}[.85\linewidth]
\hbar_{c,i,k} \geq 
2\zeta_i^\top(\dot f_i + \dot g_iu_i) + 2\zeta_i^\top\dot Y_i\hat \theta -\norm{2\zeta_i^\top\dot Y_i} \nu \\- 2\norm{\zeta_i}\rho_a + 2\zeta_i^\top g_iZ_i\hat\xi -\norm{2\zeta_i^\top g_iZ_i}\eta
\\- 2\zeta_i^\top g_iv_i\\
+ \iota_k \Big( 2\zeta_i^\top(f_i + g_iu_i) + 2\zeta_i^\top Y_i\hat \theta \\- 2\norm{\zeta_i^\top Y_i}\nu\Big) + \alpha(h_{c,i,k}(x_i)).
\end{multlined}
\end{equation}
\end{proof}

\section*{Acknowledgements}
This research is supported by the National Research Foundation, Singapore, and the Civil Aviation Authority of Singapore, under the Aviation Transformation Programme.
Sponsor Award Number: A-25-07692. Award Number: \#025059-00001. Title: Multi Sector Planner.

\section*{Data Availability Statement}
Data made available at https://github.com/DengYaosheng/RLCBF.git.

\clearpage
\bibliographystyle{elsarticle-num-names}
\bibliography{VALIO}

\end{document}